\newcommand{\techRep}{true} %% switch here between true and false
\newcommand{\iftechrep}{\ifthenelse{\equal{\techRep}{true}}}
\newcommand{\conv}{\mathit{conv}}
\newcommand{\spco}[1]{#1^*}
\newcommand{\R}{\mathbb{R}}
\newcommand{\N}{\mathbb{N}}
\newcommand{\norm}[1]{\lVert#1\rVert}
\newcommand{\A}{\mathcal{A}}
\newcommand{\AF}{\overrightarrow{\mathcal{A}}}
\newcommand{\AB}{\overleftarrow{\mathcal{A}}}
\newcommand{\AFB}{\overleftarrow{\overrightarrow{\mathcal{A}}}}
\newcommand{\ABF}{\overrightarrow{\overleftarrow{\mathcal{A}}}}
\newcommand{\alphaF}{\overrightarrow{\alpha}}
\newcommand{\B}{\mathcal{B}}
\newcommand{\BV}{\mathsf{B}}
\newcommand{\Bh}{\widehat{B}}
\newcommand{\E}{\mathcal{E}}
\newcommand{\mache}{\varepsilon_\text{mach}}
\newcommand{\etaB}{\overleftarrow{\eta}}
\newcommand{\etaF}{\overrightarrow{\eta}}
\newcommand{\ExThR}{\mathit{ExTh}(\R)}
\newcommand{\FV}{\mathsf{F}}
\newcommand{\Ft}{\widetilde{F}}
\newcommand{\Fh}{\widehat{F}}
\newcommand{\Fs}[1]{F_{#1}}
\newcommand{\nF}{\overrightarrow{n}}
\newcommand{\nB}{\overleftarrow{n}}
\newcommand{\nBF}{\overrightarrow{\overleftarrow{n}}}
\newcommand{\lambdas}[1]{\lambda^{(#1)}}
\newcommand{\MF}{\overrightarrow{M}}
\newcommand{\MB}{\overleftarrow{M}}
\newcommand{\Oti}{\ensuremath{\widetilde{O}}}
\newcommand{\Ps}[1]{P_{#1}}
\newcommand{\rank}{\ensuremath{\textup{rank}}}
\newcommand{\sign}{\mathit{sign}}
\newcommand{\supp}{\mathit{supp}}
\newcommand{\ind}{\mbox{}\hspace{7mm}}
\newcommand{\indd}{\ind\ind}
\newcommand{\inddd}{\indd\ind}
\newcommand{\indddd}{\inddd\ind}
\newcommand{\inddddd}{\indddd\ind}
\newcommand{\set}[1]{\N_{#1}}
\newcommand{\hatF}{\widehat{F}}
\newcommand{\hatf}{\widehat{f}}
\newcommand{\hatg}{\widehat{g}}
\newenvironment{qtheorem}[1]{%
{\par\medskip\noindent\bf Theorem #1.}
\begin{itshape}%
}{%
\end{itshape}%
}
\newenvironment{qproposition}[1]{%
{\par\medskip\noindent\bf Proposition #1.}
\begin{itshape}%
}{%
\end{itshape}%
}
\title{Stability and Complexity of Minimising Probabilistic Automata}
\title{Stability and Complexity of Minimising Probabilistic Automata\thanks{For a full version of this paper, see~\cite{KW14-icalp-report}.}}
\titlerunning{Stability and Complexity of Minimising Probabilistic Automata}
\author{Stefan Kiefer \and Bj\"orn Wachter}
\institute{University of Oxford, UK}
\begin{document}
\maketitle
\pagestyle{plain}

\begin{abstract}
We consider the state-minimisation problem for weighted and probabilistic automata.
We provide a numerically stable polynomial-time minimisation algorithm for weighted automata,
with guaranteed bounds on the numerical error when run with floating-point arithmetic.
Our algorithm can also be used for ``lossy'' minimisation with bounded error.
We show an application in image compression.
In the second part of the paper we study the complexity of the minimisation problem for probabilistic automata.
We prove that the problem is NP-hard and in PSPACE, improving a recent EXPTIME-result.
\end{abstract}

\section{Introduction} \label{sec-intro}

Probabilistic and weighted automata were introduced in the 1960s, with many fundamental results established by
Sch\"{u}tzenberger~\cite{Schutzenberger} and Rabin~\cite{Rab63}.
Nowadays probabilistic automata are widely used in automated verification, natural-language processing, and machine learning.

Probabilistic automata (PAs) generalise deterministic finite automata (DFAs):
The transition relation specifies, for each state~$q$ and each input letter~$a$, a probability distribution on the successor state.
Instead of a single initial state, a PA has a probability distribution over states;
 and instead of accepting states, a PA has an acceptance probability for each state.
As a consequence, the language induced by a PA is a \emph{probabilistic language},
 i.e., a mapping $L : \Sigma^* \to [0,1]$, which assigns each word an acceptance probability.
Weighted automata (WAs), in turn, generalise PAs:
 the numbers appearing in the specification of a WA may be arbitrary real numbers.
As a consequence, a WA induces a \emph{weighted language}, i.e., a mapping $L : \Sigma^* \to \R$.
Loosely speaking, the weight of a word~$w$ is the sum of the weights of all accepting $w$-labelled paths through the WA.
% loosely speaking, the weight of a path is the product of all numbers appearing on it.

Given an automaton, it is natural to ask for a small automaton that accepts the same weighted language.
A small automaton is particularly desirable when further algorithms are run on the automaton,
 and the runtime of those algorithms depends crucially on the size of the automaton~\cite{KMOWW:CAV11}.
In this paper we consider the problem of minimising the number of states of a given WA or PA, while preserving its (weighted or probabilistic) language.

%It is well-known that DFAs can be minimised in polynomial time -- in fact, the problem is NL-complete~\cite{ChoH92}.
WAs can be minimised in polynomial time, using, e.g., the standardisation procedure of~\cite{Schutzenberger}.
When implemented efficiently (for instance using triangular matrices), one obtains an $O(|\Sigma| n^3)$ minimisation algorithm, where $n$ is the number of states.
%-- in fact, the problem is in randomised NC~\cite{13KMOWW-LMCS}.
As PAs are special WAs, the same holds in principle for PAs.

There are two problems with these algorithms:
 (1) numerical instability, i.e., round-off errors can lead to an automaton that is not minimal and/or induces a different probabilistic language;
 and (2) minimising a PA using WA minimisation algorithms does not necessarily result in a PA:
  transition weights may, e.g., become negative.
This paper deals with those two issues.

Concerning problem~(1), numerical stability is crucial under two scenarios:
 (a) when the automaton size makes the use of exact rational arithmetic prohibitive, and thus necessitates floating-point arithmetic~\cite{KMOWW:CAV11};
 or (b) when exact minimisation yields an automaton that is still too large and a ``lossy compression'' is called for,
  as in image compression~\cite{CulikKari93}.
Besides finding a numerically stable algorithm, we aim at two further goals:
First, a stable algorithm should also be efficient; i.e., it should be as fast as classical (efficient, but possibly unstable) algorithms.
Second, stability should be provable, and ideally there should be easily computable error bounds.
In Section~\ref{sec-minwei} we provide a numerically stable $O(|\Sigma| n^3)$ algorithm for minimising WAs.
The algorithm generalises the \emph{Arnoldi iteration}~\cite{Arnoldi51} which is used for locating eigenvalues
in numerical linear algebra.
The key ingredient, leading to numerical stability and allowing us to give error bounds, is the use of
special orthonormal matrices, called \emph{Householder reflectors}~\cite{Householder58}.
To the best of the authors' knowledge, these techniques have not been previously utilised for computations on weighted automata.

Problem~(2) suggests a study of the computational complexity of the \emph{PA~minimisation problem}:
 given a PA and $m \in \N$, is there an equivalent PA with $m$ states?
In the 1960s and 70s, PAs were studied extensively, see the survey~\cite{Bukharaev80} for references
 and Paz's influential textbook~\cite{Paz71}.
PAs appear in various flavours and under different names.
% and different notions of minimality were studied.
%
%The \emph{issue of PA minimisation} appears in several variants in the literature.
%When speaking of ``minimising an automaton'' one usually means finding a minimal equivalent automaton,
% but the terms \emph{minimal} and \emph{equivalent} do not have unique meanings.
For instance, in \emph{stochastic sequential machines}~\cite{Paz71} there is no fixed initial state distribution,
 so the semantics of a stochastic sequential machine is not a probabilistic language, but a mapping from initial distributions to probabilistic languages.
This gives rise to several notions of minimality in this model~\cite{Paz71}.
%Under this semantics the minimal automaton may be required to \emph{cover} the given automaton,
% meaning that for every initial state distribution in the given automaton there exists an initial state distribution in the minimal automaton
%  such that the weighted languages coincide.
%In another variant, the given automaton is conversely required to cover the minimal automaton~\cite{Paz71}.
In this paper we consider only PAs with an initial state distribution; equivalence means equality of probabilistic languages.

One may be tempted to think that PA~minimisation is trivially in NP, by guessing the minimal PA and verifying equivalence.
However, it is not clear that the minimal PA has rational transition probabilities, even if this holds for the original PA.

For DFAs, which are special PAs, an automaton is minimal (i.e., has the least number of states)
 if and only if all states are reachable and no two states are equivalent.
However, this equivalence does in general not hold for PAs.
In fact, even if a PA has the property that no state behaves like a convex combination of other states,
 the PA may nevertheless not be minimal. % in the sense of having the least number of states among all equivalent PAs.
As an example,
consider the PA in the middle of Figure~\ref{fig-redCubeAut} on page~\pageref{fig-redCubeAut}.
State~$3$ behaves like a convex combination of states $2$ and~$4$:
 state~$3$ can be removed by splitting its incoming arc with weight~$1$ in two arcs with weight $1/2$ each and redirecting the new arcs to states $2$ and~$4$.
The resulting PA is equivalent and no state can be replaced by a convex combination of other states.
But the PA on the right of the figure is equivalent and has even fewer states.

In Section~\ref{sec-minpro} we show that the PA minimisation problem is NP-hard by a reduction from 3SAT.
A step in our reduction is to show that the following problem, the \emph{hypercube problem}, is NP-hard:
 given a convex polytope~$P$ within the $d$-dimensional unit hypercube and $m \in \N$,
  is there a convex polytope with $m$ vertices that is nested between $P$ and the hypercube?
We then reduce the hypercube problem to PA minimisation.
To the best of the authors' knowledge,
 no lower complexity bound for PA minimisation has been previously obtained,
 and there was no reduction from the hypercube problem to PA minimisation.
However, towards the converse direction,
 the textbook~\cite{Paz71} suggests that an algorithm for the hypercube problem could serve as a
 ``subroutine'' for a PA minimisation algorithm, leaving the decidability of both problems open.
In fact, problems similar to the hypercube problem were subsequently studied in the field of computational geometry,
 citing PA minimisation as a motivation~\cite{Silio79,Mitchell92,DasJoseph92,DasGoodrich95}.

The PA minimisation problem was shown to be decidable in~\cite{MateusQL12},
 where the authors provided an exponential reduction to the existential theory of the reals,
  which, in turn, is decidable in PSPACE~\cite{Can88,Renegar92}, but not known to be PSPACE-hard.
In Section~\ref{sub-PSPACE} we give a polynomial-time reduction from the PA minimisation problem
 to the existential theory of the reals.
It follows that the PA minimisation problem is in PSPACE, improving the EXPTIME result of~\cite{MateusQL12}.

\section{Preliminaries} \label{sec-prelim}

In the technical development that follows it is more convenient to talk about vectors
 and transition matrices than about states, edges, alphabet labels and weights.
However, a PA ``of size~$n$'' can be easily viewed as a PA with states $1, 2, \ldots, n$.
We use this equivalence in pictures.

Let $\N = \{0, 1, 2, \ldots\}$.
For $n \in \N$ we write $\set{n}$ for the set $\{1, 2, \ldots, n\}$.
For $m, n \in \N$, elements of $\R^m$ and~$\R^{m \times n}$ are viewed as vectors and matrices, respectively.
Vectors are row vectors by default.
Let $\alpha \in \R^m$ and $M \in \R^{m \times n}$.
We denote the entries by $\alpha[i]$ and~$M[i,j]$ for $i \in \N_m$ and $j \in \N_n$.
By~$M[i,\cdot]$ we refer to the $i$th row of~$M$.
By~$\alpha[i..j]$ for $i \le j$ we refer to the sub-vector $(\alpha[i], \alpha[i+1], \ldots, \alpha[j])$, and similarly for matrices.
We denote the transpose by $\alpha^T$ (a column vector) and $M^T \in \R^{n \times m}$.
We write $I_n$ for the $n \times n$ identity matrix.
When the dimension is clear from the context, we write $e(i)$ for the vector with $e(i)[i] = 1$ and $e(i)[j] = 0$ for $j \ne i$.
A vector $\alpha \in \R^m$ is \emph{stochastic} if $\alpha[i] \ge 0$ for all $i \in \set{m}$ and $\sum_{i=1}^m \alpha[i] \le 1$.
%Note that $e(i)$ is stochastic for all $i \in \set{n}$.
A matrix is \emph{stochastic} if all its rows are stochastic.
%Note that if $\alpha$ and~$M$ are stochastic, then so is $\alpha M$.
By $\norm{\cdot} = \norm{\cdot}_2$, we mean the 2-norm for vectors and matrices throughout the paper unless specified otherwise.
If a matrix~$M$ is stochastic, then $\norm{M} \le \norm{M}_1 \le 1$.
For a set $V \subseteq \R^n$, we write $\langle V \rangle$ to denote the vector space spanned by~$V$,
 where we often omit the braces when denoting~$V$.
For instance, if $\alpha, \beta \in \R^n$, then $\langle \{\alpha, \beta \} \rangle = \langle \alpha, \beta \rangle = \{r \alpha + s \beta \mid r, s \in \R\}$.
%For two vector spaces $\VF, \WF \subseteq \R^n$, we write $\VF \sqcup \WF \in \R^n$ to denote the vector space $\langle \VF \cup \WF\rangle$.

An \emph{$\R$-weighted automaton (WA)} $\A = (n, \Sigma, M, \alpha, \eta)$ consists of a size $n \in \N$,
 a finite alphabet~$\Sigma$,
 a map $M : \Sigma \to \R^{n \times n}$, an initial (row) vector $\alpha \in \R^n$, and a final (column) vector $\eta \in \R^n$.
Extend $M$ to $\Sigma^*$ by setting \hbox{$M(a_1 \cdots a_k) := M(a_1) \cdots M(a_k)$}.
The \emph{language~$L_\A$} of a WA~$\A$ is the mapping $L_\A : \Sigma^* \to \R$ with $L_\A(w) = \alpha M(w) \eta$.
%A WA~$\A$ is said to be \emph{zero} if $L_\A(w) = 0$ for all $w \in \Sigma^*$.
WAs $\A, \B$ over the same alphabet~$\Sigma$ are said to be \emph{equivalent} if $L_\A = L_\B$.
A WA~$\A$ is \emph{minimal} if there is no equivalent WA~$\B$ of smaller size.

A \emph{probabilistic automaton (PA)} $\A = (n, \Sigma, M, \alpha, \eta)$ is a WA,
 where $\alpha$ is stochastic,
 $M(a)$ is stochastic for all $a \in \Sigma$,
 and $\eta \in [0,1]^n$.
A PA is a \emph{DFA} if all numbers in $M, \alpha, \eta$ are $0$ or~$1$.

\section{Stable WA Minimisation} \label{sec-minwei}

In this section we discuss  WA minimisation.
In Section~\ref{sub-brzozowski} we describe a WA minimisation algorithm in terms of elementary linear algebra.
The presentation reminds of Brzozowski's algorithm for NFA minimisation~\cite{Brzozowski62}.%
\footnote{In~\cite{BonchiPrakash} a very general Brzozowski-like minimization algorithm is presented in terms of universal algebra.
One can show that it specialises to ours in the WA setting.}
WA minimisation techniques are well known, originating in~\cite{Schutzenberger}, cf.\ also \cite[Chapter II]{BerstelReutenauer} and~\cite{Beimel00}.
Our algorithm and its correctness proof may be of independent interest, as they appear to be particularly succinct.
In Sections \ref{sub-arnoldi}~and~\ref{sub-min-image} we take further advantage of the linear algebra setting
 and develop a numerically stable WA minimisation algorithm.
%, which is crucial for minimising large WAs in the presence of round-off errors.

\subsection{Brzozowski-like WA Minimisation} \label{sub-brzozowski}

Let $\A = (n, \Sigma, M, \alpha, \eta)$ be a WA.
Define the \emph{forward space} of~$\A$ as the (row) vector space
 $\FV := \langle \alpha M(w) \mid w \in \Sigma^* \rangle$.
Similarly, let the \emph{backward space} of~$\A$ be the (column) vector space
 $\BV := \langle M(w) \eta \mid w \in \Sigma^* \rangle$.
Let $\nF \in \N$ and $F \in \R^{\nF \times n}$ such that the rows of~$F$ form a basis of~$\FV$.
Similarly, let $\nB \in \N$ and $B \in \R^{n \times \nB}$ such that the columns of~$B$ form a basis of~$\BV$.
Since $\FV M(a) \subseteq \FV$ and $M(a) \BV \subseteq \BV$ for all $a \in \Sigma$,
 there exist maps $\MF : \Sigma \to \R^{\nF \times \nF}$ and $\MB : \Sigma \to \R^{\nB \times \nB}$ such that
 \begin{equation}
  F M(a) = \MF(a) F \quad \text{and} \quad M(a) B = B \MB(a) \quad \text{for all $a \in \Sigma$.}
   \label{eq-commutativity}
 \end{equation}
We call $(F, \MF)$ a \emph{forward reduction} and $(B, \MB)$ a \emph{backward reduction}.
We will show that minimisation reduces to computing such reductions.
By symmetry we can focus on forward reductions.
We call a forward reduction $(F, \MF)$ \emph{canonical} if
 $F[1,\cdot]$ (i.e., the first row of~$F$) is a multiple of~$\alpha$,
 and the rows of~$F$ are orthonormal, i.e., $F F^T = I_{\nF}$.

Let $\A = (n, \Sigma, M, \alpha, \eta)$ be a WA with forward and backward reductions $(F,\MF)$ and $(B,\MB)$, respectively.
Let $\alphaF \in \R^{\nF}$ be a row vector such that \hbox{$\alpha = \alphaF F$};
let $\etaB \in \R^{\nB}$ be a column vector such that $\eta = B \etaB$.
(If $(F,\MF)$ is canonical, we have $\alphaF = (\pm \norm{\alpha},0,\ldots,0)$.)
Call $\AF := (\nF, \Sigma, \MF, \alphaF, F \eta)$ a \emph{forward WA} of~$\A$ with base~$F$
 and $\AB := (\nB, \Sigma, \MB, \alpha B, \etaB)$ a \emph{backward WA} of~$\A$ with base~$B$.
By extending~\eqref{eq-commutativity} one can see that these automata are equivalent to~$\A$:
\newcommand{\stmtpropequivalence}{
 Let $\A$ be a WA.
 Then $L_\A = L_{\AF} = L_{\AB}$.
}
\begin{proposition}  \label{prop-equivalence}
 \stmtpropequivalence
\end{proposition}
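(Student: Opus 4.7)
The plan is to show that the commutativity relation in~\eqref{eq-commutativity}, which holds for single letters, lifts to arbitrary words, and then to compute $L_{\AF}(w)$ and $L_{\AB}(w)$ directly using the definitions of $\alphaF$ and $\etaB$.

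First I would prove by induction on $|w|$ that for every $w \in \Sigma^*$,
\[
 F M(w) = \MF(w) F \quad \text{and} \quad M(w) B = B \MB(w).
\]
The base case $w = \varepsilon$ is immediate since $M(\varepsilon) = I_n$, $\MF(\varepsilon) = I_{\nF}$, $\MB(\varepsilon) = I_{\nB}$. For the inductive step $w = u a$, I would use $FM(ua) = FM(u)M(a) = \MF(u)FM(a) = \MF(u)\MF(a)F = \MF(ua)F$, applying the induction hypothesis and then~\eqref{eq-commutativity}. The backward identity is analogous.

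Second, using these lifted identities, the equivalence follows by direct computation. For the forward WA,
\[
 L_{\AF}(w) = \alphaF \MF(w) (F \eta) = (\alphaF F) M(w) \eta = \alpha M(w) \eta = L_\A(w),
\]
where the second equality uses $\MF(w) F = F M(w)$ and the third uses $\alpha = \alphaF F$. For the backward WA,
\[
 L_{\AB}(w) = (\alpha B) \MB(w) \etaB = \alpha M(w) (B \etaB) = \alpha M(w) \eta = L_\A(w),
\]
using $B \MB(w) = M(w) B$ and $\eta = B \etaB$.

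The proof is essentially routine linear algebra; there is no real obstacle. The only point requiring slight care is to confirm that vectors $\alphaF$ and $\etaB$ with the required properties exist, but this is guaranteed because the rows of~$F$ form a basis of $\FV \ni \alpha$ and the columns of~$B$ form a basis of $\BV \ni \eta$, so $\alpha$ and $\eta$ lie in the respective row/column spaces of~$F$ and~$B$.
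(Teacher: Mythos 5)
Your proposal is correct and follows essentially the same route as the paper's proof: lift the commutativity relations~\eqref{eq-commutativity} from letters to words by induction, then compute $L_{\AF}(w)$ and $L_{\AB}(w)$ directly using $\alpha = \alphaF F$ and $\eta = B \etaB$. The paper simply states the lifted identities without writing out the induction, but the substance is identical.
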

Further, applying both constructions consecutively yields a minimal WA:
\newcommand{\stmtthmWAminimal}{
 Let $\A$ be a WA.
 Let $\A' = \AFB$ or $\A' = \ABF$.
 Then $\A'$ is minimal and equivalent to~$\A$.
}
\begin{theorem} \label{thm-WA-minimal}
 \stmtthmWAminimal
\end{theorem}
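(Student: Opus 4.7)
The plan is to prove equivalence and minimality separately, focusing by symmetry on $\A' = \AFB$; the case $\A' = \ABF$ is analogous. Equivalence is immediate from two applications of Proposition~\ref{prop-equivalence}: first to $\A$, giving $L_\A = L_{\AF}$, and then to $\AF$, giving $L_{\AF} = L_{\AFB}$, since $\AFB$ is by construction the backward WA of $\AF$.

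For minimality, the plan is to identify the size of every equivalent WA with the rank of a single Hankel-type object depending only on $L_\A$. Define the (formal) $\Sigma^* \times \Sigma^*$ matrix $H_\A$ with entries $H_\A[u,v] := L_\A(uv) = \alpha M(u) M(v) \eta$. For any WA $\B = (m,\Sigma,N,\beta,\zeta)$ equivalent to~$\A$, the identity $H_\A = H_\B$ factors as $\Phi\Psi$, where $\Phi$ has rows $\beta N(u)$ and $\Psi$ has columns $N(v)\zeta$, so $\rank H_\A \le m$; this is the easy half of the lower bound. To match it for $\AFB$, I would first show that $\AF$ has full forward space, i.e.\ its forward space equals $\R^{\nF}$: iterating $FM(a) = \MF(a)F$ gives $\alphaF \MF(w) F = \alpha M(w)$ for all $w$, and linear independence of the rows of $F$ makes the map $v \mapsto vF$ injective, so $\{\alphaF \MF(w) : w\in\Sigma^*\}$ must span $\R^{\nF}$. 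Applying the same factorisation $H_{\AF} = \Phi'\Psi'$ to $\AF$ now yields a left factor whose rows span all of $\R^{\nF}$, hence of full column rank~$\nF$, whence $\rank H_{\AF} = \rank \Psi'$. But $\rank \Psi'$ is the dimension of the backward space of $\AF$, which by construction is the size of $\AFB$. Combined with $H_{\AF} = H_\A$ and the lower bound, $\AFB$ is minimal.

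The main obstacle is precisely the full-forward-space step: without it, one only obtains $\rank H_\A \le \text{size}(\AFB)$, whereas the equality is exactly what turns the Hankel-rank bound into a minimality proof. The remaining work is bookkeeping around the well-known factorisation of the Hankel matrix through the state space of any WA, and a symmetric argument for $\ABF$ using full backward space instead.
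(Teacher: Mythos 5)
Your proof is correct and follows essentially the same approach as the paper: both arguments bound the size of any equivalent WA below by the rank of the Hankel matrix via the $\Phi\Psi$-type factorisation (the paper's Proposition~\ref{prop-direction-1}), and then show that the double-reduced automaton achieves that rank. The only difference is presentational: where you isolate the ``full forward space'' observation (that the forward space of $\AF$ is all of $\R^{\nF}$, making the left Hankel factor injective so that $\rank H_{\AF} = \rank \Psi'$), the paper instead threads a direct chain of dimension equalities through the mixed matrix $\Fh B$, using that the columns of $B$ and of $\Bh$ span the same space; these are two packagings of the same underlying dimension count.
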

Theorem~\ref{thm-WA-minimal} mirrors Brzozowski's NFA minimisation algorithm.
We give a short proof in \iftechrep{Appendix~\ref{app-brzozowski}}{\cite{KW14-icalp-report}}.

\subsection{Numerically Stable WA Minimisation} \label{sub-arnoldi}

Theorem~\ref{thm-WA-minimal} reduces the problem of minimising a WA to the problem of computing a forward and a backward reduction.
In the following we focus on computing a \emph{canonical} (see above for the definition) forward reduction $(F, \MF)$.
Figure~\ref{fig-meta-algorithm} shows a generalisation of Arnoldi's iteration~\cite{Arnoldi51} to multiple matrices.
Arnoldi's iteration is typically used for locating eigenvalues~\cite{Golub}.
Its generalisation to multiple matrices is novel, to the best of the authors's knowledge.
Using~\eqref{eq-commutativity} one can see that it computes a canonical forward reduction
 by iteratively extending a partial orthonormal basis $\{ f_1, \ldots, f_j \}$ for the forward space~$\FV$.

\begin{figure}[h]
\vspace{-.2cm}
\begin{minipage}{\textwidth}
 function ArnoldiReduction \\
 input: $\alpha \in \R^n$; %with $\norm{\alpha} = 1$;
        $M: \Sigma \to \R^{n \times n}$ \\
 output: canonical forward reduction $(F, \MF)$ with $F \in \R^{\nF \times n}$ and $\MF: \Sigma \to \R^{\nF \times \nF}$ \\
  \ind $\ell$ := $0$; $j := 1$; $f_1 := \alpha / \norm{\alpha}$ \ (or $f_1 := - \alpha / \norm{\alpha}$) \\ %; $\ws{1} := \varepsilon$ \\
  \ind while $\ell < j$ do \\
  \indd $\ell := \ell + 1$ \\
  \indd for $a \in \Sigma$ do \\
  \inddd if $f_\ell M(a) \not\in \langle f_1, \ldots, f_j \rangle$ \\
  \indddd $j := j+1$ \\ %; $\ws{j} := \ws{\ell} a$ \\
  \indddd define $f_j$ orthonormal to $f_1, \ldots, f_{j-1}$ such that \\
  \inddddd $\langle f_1, \ldots, f_{j-1}, f_\ell M(a) \rangle = \langle f_1, \ldots, f_j \rangle$ \\
  \inddd define $\MF(a)[\ell,\cdot]$ such that $f_\ell M(a) = \sum_{i=1}^j \MF(a)[\ell,i] f_i$ \\
  \inddd \hspace{30mm} and $\MF(a)[\ell,j{+}1..n] = (0,\ldots,0)$ \\
  \ind $\nF := j$; form $F \in \R^{\nF \times \nF}$ with rows $f_{1}, \ldots, f_{\nF}$ \\
  \ind return $F$ and $\MF(a)[1..\nF,1..\nF]$ for all $a \in \Sigma$
\end{minipage}
\caption{Generalised Arnoldi iteration.}
\vspace{-.2cm}
\label{fig-meta-algorithm}
\end{figure}

For efficiency, one would like to run generalised Arnoldi iteration (Figure~\ref{fig-meta-algorithm}) using floating-point arithmetic.
This leads to \emph{round-off errors}.
The check ``if $f_\ell M(a) \not\in \langle f_1, \ldots, f_j \rangle$'' is particularly problematic:
 since the vectors $f_1, \ldots, f_j$ are computed with floating-point arithmetic,
  we cannot expect that $f_\ell M(a)$ lies \emph{exactly} in the vector space spanned by those vectors,
   even if that would be the case without round-off errors.
As a consequence, we need to introduce an \emph{error tolerance parameter} $\tau > 0$,
 so that the check ``$f_\ell M(a) \not\in \langle f_1, \ldots, f_j \rangle$'' returns \emph{true} only
  if $f_\ell M(a)$ has a ``distance'' of more than~$\tau$ to the vector space $\langle f_1, \ldots, f_j \rangle$.%
\footnote{This will be made formal in our algorithm.}
Without such a ``fuzzy'' comparison the resulting automaton could even have more states than the original one.
The error tolerance parameter~$\tau$ causes further errors.

To assess the impact of those errors, we use the \emph{standard model of floating-point arithmetic},
 which assumes that the elementary operations $\mathord{+}, \mathord{-}, \mathord{\cdot}, \mathord{/}$ are computed exactly,
  up to a relative error of at most the \emph{machine epsilon}~$\mache \ge 0$.
It is stated in~\cite[Chapter~2]{Higham}: ``This model is valid for most computers, and, in particular, holds for IEEE standard arithmetic.''
%This assumption is satisfied by IEEE arithmetic, and ``[v]irtually all modern processors implement IEEE arithmetic''~\cite[Chapter~2]{Higham}.
The bit length of numbers arising in a numerical computation is bounded by hardware, using suitable roundoff.
So we adopt the convention of numerical linear algebra to take the number of arithmetic operations as a measure of time complexity.

The algorithm ArnoldiReduction (Figure~\ref{fig-meta-algorithm}) leaves open how to implement the conditional
 ``if $f_\ell M(a) \not\in \langle f_1, \ldots, f_j \rangle$'',
  and how to compute the new basis element~$f_j$.
In \iftechrep{Appendix~\ref{app-refinement}}{\cite{KW14-icalp-report}} we propose an instantiation \emph{HouseholderReduction} of ArnoldiReduction
 based on so-called \emph{Householder reflectors}~\cite{Householder58},
  which are special orthonormal matrices.
%As a rule of thumb, matrix-vector multiplications with Householder reflectors are numerically very stable~\cite[Chapter~19]{Higham}.
%We formally define Householder reflectors in Appendix~\ref{app-refinement}.
%There we describe an instantiation of ArnoldiReduction, called \emph{HouseholderReduction},
We prove the following stability property:
\newcommand{\stmtpropDeltabound}{
\begin{itemize}
\item[1.] The number of arithmetic operations is $O(|\Sigma| n^3)$.
\item[2.] HouseholderReduction instantiates ArnoldiReduction.
\item[3.] The computed matrices satisfy the following error bound:
 For each $a \in \Sigma$, the matrix $\E(a) \in \R^{\nF \times n}$ with $\E(a) := F M(a) - \MF(a) F$ satisfies
  \[
    \norm{\E(a)} \le 2 \sqrt{n} \tau + c m n^{3} \mache\;,
  \]
  where $m > 0$ is such that $\norm{M(a)} \le m$ holds for all $a \in \Sigma$, and $c>0$ is an input-independent constant.
\end{itemize}
}
\begin{proposition} \label{prop-Delta-bound}
 Consider the algorithm \emph{HouseholderReduction} in \iftechrep{Appendix~\ref{app-refinement}}{\cite{KW14-icalp-report}}, which has the following interface:
\begin{center} \upshape \small
  \begin{minipage}{\textwidth}
 function HouseholderReduction \\
 input: $\alpha \in \R^n$; %with $\norm{\alpha} = 1$;
        $M: \Sigma \to \R^{n \times n}$;
        error tolerance parameter $\tau \ge 0$ \\
 output: canonical forward reduction $(F, \MF)$ with $F \in \R^{\nF \times n}$ and $\MF: \Sigma \to \R^{\nF \times \nF}$
\end{minipage}
\end{center}
We have:
\stmtpropDeltabound
\end{proposition}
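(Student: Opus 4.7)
I would handle claims 1 and 2 first, as they are largely mechanical. For the complexity, each pass through the outer while loop, for each letter $a \in \Sigma$, performs one matrix-vector multiplication $f_\ell M(a)$ at cost $O(n^2)$, applies the (at most $n$) already-stored Householder reflectors to this vector at cost $O(n^2)$, and possibly constructs one new reflector at cost $O(n)$; updating the $\ell$th row of $\MF(a)$ is free once the coordinates are read off. Since the loop runs until $\ell$ reaches $\nF \le n$, the total operation count is $O(|\Sigma| n^3)$. For the instantiation claim I would verify step by step that, in exact arithmetic and with $\tau = 0$, the product $Q$ of the Householder reflectors is an orthonormal matrix whose first $\nF$ rows are $f_1, \ldots, f_{\nF}$; applying $Q$ to $f_\ell M(a)$ reads off the coordinates of $f_\ell M(a)$ with respect to the basis and exposes any residual orthogonal component, which the algorithm then either converts into $f_{j+1}$ or recognises as zero. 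This matches the behaviour prescribed in ArnoldiReduction line by line.

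The substantial claim is the error bound. My plan is a row-by-row backward-error analysis, fixing $a \in \Sigma$ and $\ell \in \set{\nF}$ and splitting the $\ell$th row of $\E(a)$ into two contributions. The first is an \emph{algorithmic} truncation: whenever the tolerance test fails, the algorithm discards a residual component of $2$-norm at most $\tau$, contributing at most $\tau$ per row. The second is a \emph{floating-point} contribution for which I would invoke standard backward-stability results for Householder reflectors (see \cite[Ch.~19]{Higham}): the computed composition of the up-to-$n$ reflectors, applied to $f_\ell M(a)$, agrees with the exact image of a slightly perturbed orthonormal matrix applied to an input perturbed by $O(nm\mache)$ in $2$-norm, so the floating-point error in the $\ell$th row of $\E(a)$ is of order $n^2 m \mache$. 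Aggregating over the $\nF \le n$ rows via $\norm{\E(a)} \le \sqrt{\nF}\,\max_\ell \norm{\E(a)[\ell,\cdot]}$ and absorbing constants into~$c$ yields $\norm{\E(a)} \le 2\sqrt{n}\,\tau + c m n^{3}\mache$ as claimed.

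The main obstacle is that the matrix $F$ is itself defined by the very Householder reflectors whose rounding errors we are analysing, so the same errors appear on both sides of the identity $FM(a) = \MF(a)F + \E(a)$, creating a temptation to double-count. My intended remedy is the standard one: rather than comparing the floating-point computation with the exact one at face value, I would exhibit a truly orthonormal matrix $\widetilde{F}$ and a perturbed input $\widetilde{M}(a)$, each within the announced perturbation bounds, for which the reduction identity $\widetilde{F}\,\widetilde{M}(a) = \MF(a)\,\widetilde{F}$ holds exactly; then $\E(a)$ is expressed in terms of $F - \widetilde{F}$ and $M(a) - \widetilde{M}(a)$, and the first-order terms combine to the stated bound without duplication. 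With this accounting in place, the proof of claim 3 reduces to bookkeeping that uses only $\norm{f_\ell} \le 1 + O(n\mache)$ and $\norm{M(a)} \le m$.
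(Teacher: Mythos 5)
Your outline for parts~1 and~2 matches the paper's argument, so I will focus on part~3.

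There is a genuine gap in your accounting of the $\tau$-contribution. You say that when the tolerance test fails ``the algorithm discards a residual component of $2$-norm at most~$\tau$, contributing at most~$\tau$ per row.'' This mental model --- a simple leftover that is dropped --- is incomplete, and it does not produce the $2\sqrt{n}\tau$ term in the statement. Here is the point you are missing. At iteration~$\ell$, when the test fails, the residual lives in coordinates $j{+}1,\ldots,n$ relative to the \emph{current} reflector product $F_j = P_j \cdots P_1$. When the algorithm finishes, the output keeps $\MF(a)[\ell,1..\nF]$ and drops only coordinates $\nF{+}1,\ldots,n$. The coordinates $j{+}1,\ldots,\nF$ of that residual are therefore \emph{retained} in the output, but they now multiply the basis vectors $f_{j+1},\ldots,f_{\nF}$, which are defined by reflectors $P_{j+1},\ldots,P_{\nF}$ constructed only in \emph{later} iterations and which have nothing to do with the residual of $f_\ell M(a)$. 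Writing out
\[
 \E(a)[\ell,\cdot] \;=\; \sum_{i=j+1}^{n} \MF(a)[\ell,i]\,F_j[i,\cdot] \;-\; \sum_{i=j+1}^{\nF} \MF(a)[\ell,i]\,F[i,\cdot]\;,
\]
you get two orthonormal-combination terms, each with coefficient vector of $2$-norm at most~$\tau$, hence the row-wise bound is $2\tau$, not~$\tau$. Without this decomposition your aggregation would give $\sqrt{n}\tau$, half the claimed bound, and more importantly you would have missed the second source of error entirely.

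On the floating-point contribution your route genuinely differs from the paper's. You propose a backward-error formulation: exhibit an exactly orthonormal $\widetilde{F}$ and a perturbed $\widetilde{M}(a)$ for which $\widetilde{F}\widetilde{M}(a) = \MF(a)\widetilde{F}$ holds exactly, then bound $F - \widetilde{F}$ and $M(a) - \widetilde{M}(a)$. The paper instead runs a forward analysis: it introduces the \emph{ideal} quantities $F$, $g_{\ell,a}$ that the algorithm would compute with the \emph{same} sequence of arithmetic operations and the \emph{same} branch decisions but without rounding, and bounds $\hatF - F$ via \cite[Ch.~19.3]{Higham}, the matrix-vector products via \cite[Ch.~3.5]{Higham}, and the reflector applications via \cite[Lemma~19.3]{Higham}, then recombines. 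Your backward-error framing is plausible for the pure rounding part, but you should be explicit that it can only apply \emph{after} the control flow has been pinned down: the ``if'' conditional may evaluate differently under rounding than in exact arithmetic, and no small perturbation of $F$ or $M(a)$ can repair a flipped branch. The paper resolves this by defining the ideal computation to follow the computed branch decisions and observing that any ``wrong'' branch still leaves a residual below~$\tau$, so it is already charged to the $\tau$-term. You need the same device; without it the backward-error statement you want does not hold. Finally, note that your back-of-envelope per-row bound $O(mn^2\mache)$ gives only $O(mn^{2.5}\mache)$ after the $\sqrt{n}$ aggregation; the paper's careful accounting (two separate applications of the $n^{1.5}$ and $n^{2.5}$ Householder bounds, plus the $\hatF$ perturbation) lands at $O(mn^{2.5}\mache)$ per row and hence $O(mn^3\mache)$ overall, matching the statement.
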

The proof follows classical error-analysis techniques for QR factorisations with Householder reflectors~\cite[Chapter~19]{Higham},
 but is substantially complicated by the presence of the ``if'' conditional and the resulting need for the $\tau$ parameter.
By Proposition~\ref{prop-Delta-bound}.2.\ HouseholderReduction computes a precise canonical forward reduction for $\mache = \tau = 0$.
For positive $\mache$ and~$\tau$ the error bound grows linearly in $\mache$ and~$\tau$, and with modest polynomials in the WA size~$n$.
In practice $\mache$ is very small%
\footnote{With IEEE double precision, e.g., it holds $\mache = 2^{-53}$~\cite{Higham}.}%
, so that the term $c m n^{3} \mache$ can virtually be ignored.

The use of Householder reflectors is crucial to obtain the bound of Proposition~\ref{prop-Delta-bound}.
Let us mention a few alternative techniques, which have been used for computing certain matrix factorisations.
Such factorisations (QR or LU) are related to our algorithm.
\emph{Gaussian elimination} can also be used for WA minimisation in time $O(|\Sigma| n^3)$,
 but its stability is governed by the \emph{growth factor}, which can be exponential even with \emph{pivoting}~\cite[Chapter~9]{Higham},
  so the bound on $\norm{\E(a)}$ in Proposition~\ref{prop-Delta-bound} would include a term of the form $2^n \mache$.
The most straightforward implementation of ArnoldiReduction would use the \emph{Classical Gram-Schmidt} process,
 which is highly unstable~\cite[Chapter~19.8]{Higham}.
A variant, the \emph{Modified Gram-Schmidt} process is stable,
 but the error analysis is complicated by a possibly loss of orthogonality of the computed matrix~$F$.
The extent of that loss depends on certain condition numbers (cf.~\cite[Equation~(19.30)]{Higham}), which are hard to estimate or control in our case.
In contrast, our error bound is independent of condition numbers.

Using Theorem~\ref{thm-WA-minimal} we can prove:
\newcommand{\stmtthmerrorbound}{
Consider the following algorithm:
\begin{center} \upshape \small
\begin{minipage}{\textwidth}
 function HouseholderMinimisation \\
 input: WA $\A = (n, \Sigma, M, \alpha, \eta)$;
        error tolerance parameter $\tau \ge 0$ \\
 output: minimised WA $\A' = (n', \Sigma, M', \alpha', \eta')$. \\
 \ind compute forward reduction $(F, \MF)$ of~$\A$ using HouseholderReduction \\
 \ind form $\AF := (\nF, \Sigma, \MF, \alphaF, \etaF)$ as the forward WA of~$\A$ with base~$F$ \\
 \ind compute backward reduction $(B, M')$ of~$\AF$ using HouseholderReduction \\
 \ind form $\A' := (n', \Sigma, M', \alpha', \eta')$ as the backward WA of~$\AF$ with base~$B$ \\
 \ind return $\A'$
\end{minipage}
\end{center}
We have:
 \begin{itemize}
  \item[1.] The number of arithmetic operations is $O(|\Sigma| n^3)$.
  \item[2.] For $\mache = \tau = 0$, the computed WA $\A'$ is minimal and equivalent to~$\A$.
  \item[3.] Let $\tau > 0$.
   Let $m > 0$ such that $\norm{A} \le m$ holds for all \hbox{$A \in \{M(a), \MF(a), M'(a) \mid a \in \Sigma\}$}.
   Then for all $w \in \Sigma^*$ we have
   \begin{align*}
    |L_\A(w) - L_{\A'}(w)| & \le 4 |w| \norm{\alpha} m^{|w|-1} \norm{\eta} \sqrt{n} \tau \\
                           & \quad \mbox{} + c \max\{|w|,1\} \norm{\alpha} m^{|w|} \norm{\eta} n^3 \mache\;,
   \end{align*}
   where $c>0$ is an input-independent constant.
 \end{itemize}
}
\begin{theorem} \label{thm-error-bound}
\stmtthmerrorbound
\end{theorem}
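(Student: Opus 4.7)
The plan is to dispatch the three claims in turn. Parts~(1) and~(2) are quick consequences of Proposition~\ref{prop-Delta-bound} and Theorem~\ref{thm-WA-minimal}, while part~(3) requires a careful two-stage error propagation through the forward and backward HouseholderReduction calls.

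For part~(1), both HouseholderReduction calls cost $O(|\Sigma| n^3)$ by Proposition~\ref{prop-Delta-bound}.1 (the second one is applied to $\AF$, whose size is $\nF \le n$), and forming $\AF$ from $(F,\MF)$ and $\A'$ from $(B,M')$ costs $O(n^2)$ each, which is absorbed. For part~(2), when $\mache = \tau = 0$, Proposition~\ref{prop-Delta-bound}.2 gives that each call computes an exact canonical reduction; hence $\AF$ is the forward WA of $\A$ with base~$F$ and $\A'$ is the backward WA of $\AF$ with base~$B$, so $\A' = \AFB$, and Theorem~\ref{thm-WA-minimal} gives minimality and equivalence.

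For part~(3), I would split
$L_\A(w) - L_{\A'}(w) = (L_\A(w) - L_{\AF}(w)) + (L_{\AF}(w) - L_{\A'}(w))$
and bound each summand by half of the target expression, yielding the leading factor of~$4$ in front of~$\sqrt{n}\tau$. For the first summand, set $\E(a) := FM(a) - \MF(a)F$ and rewrite as $\MF(a)F = FM(a) - \E(a)$. An induction on $k = |w|$, with base case $\alphaF F = \alpha$ and inductive step using this identity, will establish $\alpha M(w) - \alphaF \MF(w) F = \Delta_w$ for some row vector~$\Delta_w$ satisfying $\norm{\Delta_w} \le k\, \norm{\alpha}\, m^{k-1} \max_{a \in \Sigma} \norm{\E(a)}$; the bounds $\norm{\alphaF} = \norm{\alpha}$ (from $FF^T = I_{\nF}$) and $\norm{\MF(w)} \le m^{|w|}$ feed the induction. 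Multiplying by~$\eta$ and substituting Proposition~\ref{prop-Delta-bound}.3 yields half of the required estimate. The second summand is handled symmetrically via the backward variant of HouseholderReduction applied to $\AF$, using $\norm{\alphaF B} \le \norm{\alpha}$, $\norm{F\eta} \le \norm{\eta}$, and $\nF \le n$.

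The main obstacle will be the constant-order $O(\mache)$ error that appears at $|w| = 0$: the inductive bound is vacuous for the empty word, yet in floating-point the quantities $\alphaF$, $F\eta$, and their backward analogues are computed only up to machine precision, so $|L_\A(\varepsilon) - L_{\A'}(\varepsilon)|$ is of order $\norm{\alpha}\norm{\eta}\mache$ rather than zero. Absorbing this floating-point artefact into the input-independent constant~$c$ is precisely what forces the $\max\{|w|,1\}$ factor on the $\mache$ term, and the closing step of the proof will be to verify that the constants emerging from the forward and backward stages combine cleanly into a single~$c$.
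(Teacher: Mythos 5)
Your proposal is correct and follows essentially the same route as the paper's proof. Parts~(1) and~(2) are handled identically. For part~(3) the paper also splits via $L_{\AF}$, proves an inductive lemma of the form $\norm{\hatF M(w) - \MF(w)\hatF} \le b|w|m^{|w|-1}$ with $b$ the per-letter bound from Proposition~\ref{prop-Delta-bound}.3, handles the $\Oti(\norm{\alpha}\norm{\eta}\mache)$ floating-point error in $\alphaF$ and $F\eta$ separately (which is exactly where the $\max\{|w|,1\}$ factor comes from, as you observed), and repeats the argument for the backward stage to obtain the overall factor~$4$. The only presentational difference is that the paper phrases the induction at the matrix level, $\hatF M(w) - \MF(w)\hatF$, keeping the computed matrix $\hatF$ explicit and only contracting with $\alphaF$ and $\eta$ at the end, whereas you fold $\alphaF$ in from the start; either bookkeeping yields the same bound, but the paper's version makes the separation between the round-off in $\hatF$ and the round-off in $\alphaF$, $\etaF$ slightly cleaner.
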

The algorithm computes a backward reduction by running the straightforward backward variant of HouseholderReduction.
We remark that for PAs one can take $m = 1$ for the norm bound~$m$ from part~3.\ of the theorem
 (or $m = 1 + \varepsilon$ for a small $\varepsilon$ if unfortunate roundoff errors occur).
It is hard to avoid an error bound exponential in the word length~$|w|$,
 as $|L_\A(w)|$ itself may be exponential in~$|w|$ (consider a WA of size~$1$ with $M(a) = 2$).
Theorem~\ref{thm-error-bound} is proved in \iftechrep{Appendix~\ref{app-error}}{\cite{KW14-icalp-report}}.

The error bounds in Proposition~\ref{prop-Delta-bound} and Theorem~\ref{thm-error-bound}
  suggest to choose a small value for the error tolerance parameter~$\tau$.
% provide some guidance as to how the error tolerance parameter~$\tau$ should be chosen:
But as we have discussed, the computed WA may be non-minimal if $\tau$ is set too small or even to~$0$,
%But if $\tau$ is set too small or even to~$0$, then the computed WA may be non-minimal,
 intuitively because round-off errors may cause the algorithm to overlook minimisation opportunities.
%On the other hand, if $\tau$ is set large, the error bounds becomes large.
So it seems advisable to choose $\tau$ smaller (by a few orders of magnitude) than the desired bound on~$\norm{\E(a)}$,
 but larger (by a few orders of magnitude) than~$\mache$.
Note that for $\mache > 0$ Theorem~\ref{thm-error-bound}
 does not provide a bound on the number of states of~$\A'$.

To illustrate the stability issue we have experimented with minimising a PA~$\A$ derived from Herman's protocol as in~\cite{KMOWW:CAV11}.
The PA has 190 states and $\Sigma = \{a\}$.
When minimising with the (unstable) Classical Gram-Schmidt process,
 we have measured a huge error of $|L_\A(a^{190}) - L_{\A'}(a^{190})| \approx 10^{36}$.
With the Modified Gram-Schmidt process and the method from Theorem~\ref{thm-error-bound}
 the corresponding errors were about $10^{-7}$, which is in the same order as the error tolerance parameter~$\tau$.

\subsection{Lossy WA Minimisation} \label{sub-min-image}

A larger error tolerance parameter~$\tau$ leads to more ``aggressive'' minimisation of a possibly already minimal WA.
The price to pay is a shift in the language: one would expect only $L_\A'(w) \approx L_{\A}(w)$.
Theorem~\ref{thm-error-bound} provides a bound on this imprecision.
In this section we illustrate the trade-off between size and precision using an application in image compression.

Weighted automata can be used for image compression, as suggested by Culik et al.~\cite{CulikKari93}.
%Inspired by~\cite{CulikKari93}, we study an application of our algorithms to this field.
An image, represented as a two-dimensional matrix of grey-scale values,
can be encoded as a weighted automaton where each pixel is addressed by a unique word.
To obtain this automaton, the image is recursively subdivided into quadrants.
There is a state for each quadrant and transitions from a quadrant to its
sub-quadrants. At the level of the pixels, the automaton accepts
with the correct grey-scale value.

Following this idea, we have implemented a prototype tool for image compression based on the algorithm of Theorem~\ref{thm-error-bound}.
We give details and show example pictures in \iftechrep{Appendix~\ref{app-image-compression}}{\cite{KW14-icalp-report}}.
This application illustrates lossy minimisation.
The point is that Theorem~\ref{thm-error-bound} guarantees bounds on the loss.

\section{The Complexity of PA Minimisation} \label{sec-minpro}

Given a PA $\A = (n, \Sigma, M, \alpha, \eta)$ and $n' \in \N$,
 the \emph{PA minimisation problem} asks whether there exists a PA $\A' = (n', \Sigma, M', \alpha', \eta')$
 so that $\A$ and~$\A'$ are equivalent.
For the complexity results in this section we assume that the numbers in the description of the given PA
 are fractions of natural numbers represented in binary, so they are rational.
In Section~\ref{sub-NPhardness} we show that the minimisation problem is NP-hard.
In Section~\ref{sub-PSPACE} we show that the problem is in PSPACE
 by providing a polynomial-time reduction to the existential theory of the reals.

\subsection{NP-Hardness} \label{sub-NPhardness}

We will show:
\begin{theorem} \label{thm-NPhardness}
The PA minimisation problem is NP-hard.
\end{theorem}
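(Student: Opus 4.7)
The plan is to prove Theorem~\ref{thm-NPhardness} by chaining two polynomial-time reductions: first from 3SAT to the \emph{hypercube problem}, and then from the hypercube problem to PA minimisation.

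For the first step, I would encode a 3SAT instance over variables $x_1,\ldots,x_d$ into a polytope problem in dimension roughly~$d$, identifying each $\{0,1\}^d$-corner of the unit hypercube with a truth assignment. The polytope~$P$ and the integer~$m$ would be designed so that (i)~every nested polytope $Q$ with at most $m$ vertices is effectively forced to place its vertices at (or arbitrarily close to) hypercube corners, and (ii)~the corners available as vertices of~$Q$ must correspond to satisfying assignments. A standard way to enforce (i) is to place vertices of~$P$ very close to selected hypercube corners, so that any enclosing convex polytope must contain a vertex in each such neighbourhood; to enforce~(ii), each clause would contribute vertices of~$P$ placed just inside the hypercube opposite the unique corner (restricted to the three clause-variables) that falsifies the clause, so that this corner cannot be used without violating the containment of~$P$. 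Choosing $m$ as the number of surviving corners (or a polynomially related bound) ties the two sides together.

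For the second step, given $(P,m)$ with $P\subseteq[0,1]^d$, I would build a PA whose state space and transition matrices realise the defining points of~$P$ as the ``forward'' images $\alpha M(w)$, and whose final vector $\eta$ reads off coordinates inside~$[0,1]$. The key geometric observation, illustrated already by the middle-to-right transformation in Figure~\ref{fig-redCubeAut}, is that eliminating a PA state amounts to writing it as a \emph{convex} combination of the remaining ones: stochasticity of~$\alpha$ and of the $M(a)$ forces the combining coefficients to form a probability distribution, while $\eta\in[0,1]^n$ keeps the targets in the unit hypercube. Hence an equivalent PA of size~$m$ corresponds precisely to $m$ points in $[0,1]^d$ whose convex hull contains all the forward images, i.e.\ to a polytope with $m$ vertices nested between $P$ and the hypercube, and vice versa. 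One must verify that the construction is polynomial in the binary encoding of~$P$ and that the correspondence is tight in both directions.

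The main obstacle lies in the first reduction: the geometric encoding of 3SAT must be robust enough that a candidate polytope~$Q$ cannot ``cheat'' by placing vertices at non-corner points of the hypercube to cover several clause-constraints at once, while at the same time keeping the coordinates of~$P$ of polynomial bit-length. The second reduction is conceptually cleaner, but still requires care to ensure that \emph{every} PA of size~$m$ equivalent to the constructed one arises from a polytope (so that no ``extraneous'' minimisations exist beyond the geometric ones), which is where the careful use of stochasticity in $\alpha$, $M(a)$ and $\eta$ is essential.
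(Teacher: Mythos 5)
Your two-step outline matches the paper's architecture exactly: the paper proves Theorem~\ref{thm-NPhardness} by combining Proposition~\ref{prop-cube-NPhardness} (3SAT reduces to the hypercube problem) with Proposition~\ref{prop-redCubeAut} (the restricted hypercube problem reduces to PA minimisation). Your account of the second reduction is essentially the paper's: the constructed PA encodes the points $p_i$ as values $L_\A(a_i b_s) = p_i[s]$, and stochasticity forces the intermediate states of any smaller equivalent PA to encode points $q_j \in [0,1]^d$ with $P \subseteq \conv(Q)$. (The paper additionally restricts to instances where $P$ contains the origin, which lets it identify the single ``free'' mass of $\alpha' M'(a_i)$ with a fixed vertex $q_1 = 0$; your sketch omits this, but it is a detail that can be repaired. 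The genuinely delicate part --- showing that \emph{every} equivalent PA of size $\ell+1$ yields such a $Q$, via a support-set argument with the disjoint index sets $J_1, J_2, J_3$ --- is rightly flagged by you as needing care.)

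The real gap is in the first reduction, which you yourself identify as ``the main obstacle'' without resolving it. Your sketch places the hypercube in dimension roughly $N$ (one coordinate per variable), identifies corners with assignments, and tries to forbid falsifying corners by putting $P$-points near the ``opposite'' corner. This does not work: a point of $P$ near corner $v$ forces $Q$ to have a vertex near $v$, but it does nothing to prevent $Q$ from \emph{also} having a vertex at the falsifying corner, nor does it tie the number of $Q$-vertices to satisfiability in a controllable way. Moreover, with dimension $N$ there are $2^N$ corners and no obvious polynomial budget that distinguishes satisfiable from unsatisfiable instances. The paper sidesteps all of this with a much richer construction in dimension $d = 3N + M$: for each variable $x_i$ it introduces three coordinates $\spco{x_i}, y_i, z_i$, two hypercube vertices $e(\spco{x_i}) + f(x_i^\pm)$ and a non-vertex point $p(x_i)$ halfway between them, forcing any small $Q$ to include a ``selector'' point $s_i$ on the $z_i$-edge that encodes a truth value; for each clause $c_j$ there is a coordinate $\spco{c_j}$, three hypercube vertices, and an interior point $p(c_j)$ whose only cheap cover uses one of the $s_i$, thereby forcing the selected literal to satisfy the clause. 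The vertex budget $\ell = 3N + 3M$ is then exactly calibrated so that $Q$ exists iff $\varphi$ is satisfiable. Without some construction of this kind --- in particular the extra per-variable and per-clause coordinates that make ``cheating'' impossible --- your first reduction does not go through.
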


For the proof we reduce from a geometrical problem, the \emph{hypercube problem}, which we show to be NP-hard.
Given $d \in \N$, a finite set $P = \{p_1, \ldots, p_k\} \subseteq [0,1]^d$ of vectors (``points'') within the $d$-dimensional unit hypercube, and $\ell \in \N$,
 the \emph{hypercube problem} asks whether there is a set $Q = \{q_1, \ldots, q_\ell\} \subseteq [0,1]^d$ of at most $\ell$ points within the hypercube
  such that $\conv(Q) \supseteq P$, where
   \[
    \conv(Q) := \{\lambda_1 q_1 + \cdots + \lambda_\ell q_\ell \mid \lambda_1, \ldots, \lambda_\ell \ge 0, \ \lambda_1 + \cdots + \lambda_\ell = 1\}
   \]
    denotes the convex hull of~$Q$.
Geometrically, the convex hull of~$P$ can be viewed as a convex polytope, nested inside the hypercube, which is another convex polytope.
The hypercube problem asks whether a convex polytope with at most $\ell$ vertices can be nested in between those polytopes.
The answer is trivially yes, if $\ell \ge k$ (take $Q = P$) or if $\ell \ge 2^d$ (take $Q = \{0,1\}^d$).
We speak of the \emph{restricted hypercube problem} if $P$ contains the origin $(0, \ldots, 0)$.
We prove the following:
\newcommand{\stmtpropredCubeAut}{
 The restricted hypercube problem can in polynomial time be reduced to the PA minimisation problem.
}
\begin{proposition} \label{prop-redCubeAut}
 \stmtpropredCubeAut
\end{proposition}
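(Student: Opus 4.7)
The plan is to encode a restricted hypercube instance $(P, \ell)$, where $P = \{p_1, \ldots, p_k\} \subseteq [0,1]^d$ and $p_1 = 0$, as a PA $\A$ of polynomial size whose smallest equivalent PA has $\ell + c$ states, for a small constant $c$ fixed by the construction, if and only if there exists $Q \subseteq [0,1]^d$ with $|Q| \le \ell$ and $P \subseteq \conv(Q)$.

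The PA $\A$ is built over an alphabet $\Sigma = \{a_1, \ldots, a_d\} \cup \{b_1, \ldots, b_k\}$. I introduce $k$ \emph{point-states} $s_1, \ldots, s_k$ whose behaviour profiles under the probe letters, namely the vector $\bigl((M(a_i)\eta)[s_j]\bigr)_{i=1}^{d}$, equal $p_j \in [0,1]^d$; a \emph{routing layer}, seeded by the initial distribution, so that the word $b_j$ deterministically reaches $s_j$ (making its probe behaviour exactly $p_j$); and a small number $c$ of auxiliary states (a sink, plus a constant amount of bookkeeping) ensuring that every transition matrix is stochastic. The hypothesis $p_1 = 0$ lets us identify the sink with the point-state $s_1$, so that $c$ really is a constant and the arithmetic comparing the minimal PA size to $\ell$ is clean.

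For the \emph{if} direction, given $Q = \{q_1, \ldots, q_\ell\}$ with $P \subseteq \conv(Q)$, I replace $s_1, \ldots, s_k$ by $\ell$ new states $t_1, \ldots, t_\ell$ whose probe profiles are $q_1, \ldots, q_\ell$; for each $j$ I write $p_j = \sum_i \lambda_{ji} q_i$ as a convex combination and reroute every arc that ended in $s_j$ as the stochastic distribution $(\lambda_{ji})_i$ over $t_1, \ldots, t_\ell$. Since every $q_i \in [0,1]^d$, the new transitions are stochastic, the resulting PA is equivalent to $\A$, and it has $\ell + c$ states. For the \emph{only if} direction, suppose $\A' = (\ell + c, \Sigma, M', \alpha', \eta')$ is equivalent to $\A$. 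After reading $b_j$, the distribution $\alpha' M'(b_j)$ must produce acceptance probability $p_j[i]$ under each probe $a_i$, so $p_j$ is a convex combination of the vectors $v_q := \bigl((M'(a_i)\eta')[q]\bigr)_{i=1}^d$ as $q$ ranges over states of $\A'$. Each $v_q$ lies in $[0,1]^d$ because its entries are probabilities, and after isolating the $c$ auxiliary states by the same rigidity that forces them in $\A$, the remaining at most $\ell$ profiles form the desired $Q \subseteq [0,1]^d$ with $P \subseteq \conv(Q)$.

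The main obstacle is the \emph{only if} direction: one must design the routing and auxiliary layer so that its structure is genuinely forced in every equivalent PA, and so that no conflation of routing and probing can shave off states without providing an honest hypercube cover. This is where the special role of the origin (available precisely because the hypercube problem is restricted so that $p_1 = 0$) enters, as it lets the sink double as a legitimate point-state and keeps the count $\ell + c$ exact; it is also where the deterministic routing under the letters $b_j$ becomes crucial, forcing any state of $\A'$ reachable under $b_j$ to carry the behaviour of $p_j$ rather than an amalgamated profile, and thereby pinning down which states of $\A'$ play the role of point-states versus auxiliaries.
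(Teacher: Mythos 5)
Your reduction is, up to relabelling the alphabet and a constant offset in state count, the same as the paper's: a routing state that branches on a selector letter to one of $k$ point-states, each of which has probe profile $p_j$ under the coordinate letters, with $p_1 = 0$ exploited to save one state. The ``if'' direction (replace the $k$ point-states by $\ell$ states carrying the profiles $q_1,\dots,q_\ell$ and reroute by the convex-combination weights) is carried out correctly and matches the paper.

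The ``only if'' direction, however, is not actually proved, and the hints you give for how to close it point in the wrong direction. The gap is this: after reading the selector word, the sub-stochastic row $\beta_j := \alpha' M'(b_j)$ places $p_j$ in the convex hull of the probe-profiles $v_q$ of \emph{all} states of $\A'$ plus the origin (to absorb the missing mass $1 - \sum_q \beta_j[q]$). Since $\A'$ has $\ell + c$ states, this gives $|Q| \le \ell + c + 1$, not $|Q| \le \ell$. You say you can ``isolate the $c$ auxiliary states by the same rigidity that forces them in $\A$,'' but an arbitrary equivalent PA $\A'$ comes with no imposed layering into routing/point/sink states, so there is nothing to isolate by inspection; the layering has to be \emph{derived}. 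What the paper actually does is define three support sets (roughly: initial support usable downstream, intermediate support after a selector letter, and pre-accepting support) and prove them pairwise disjoint from the facts $L_\A(\varepsilon) = L_\A(a_i) = L_\A(b_s) = 0$; disjointness inside $\ell + 1$ states then forces the intermediate set to have at most $\ell - 1$ elements, and adding the origin yields $|Q| \le \ell$. That counting argument is the substance of the proof, and nothing in your sketch replaces it.

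A second, more local problem: you claim the deterministic routing ``forces any state of $\A'$ reachable under $b_j$ to carry the behaviour of $p_j$ rather than an amalgamated profile.'' This is the opposite of what must happen. If it were true, $\A'$ would need one state per distinct $p_j$ and no compression would be possible; the entire point of the reduction is that $\alpha' M'(b_j)$ is a \emph{distribution} over states with arbitrary profiles in $[0,1]^d$, and $p_j$ is recovered only as a convex combination of those profiles. Your earlier sentence (``so $p_j$ is a convex combination of the vectors $v_q$'') has it right; the later gloss contradicts it and would, if taken as a lemma, make the ``only if'' direction unprovable.
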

\begin{proof}[sketch]
Let $d \in \N$ and $P = \{p_1, \ldots, p_{k}\} \subseteq [0,1]^d$ and $\ell \in \N$ be an instance of the restricted hypercube problem,
 where $p_1 = (0, \ldots, 0)$ and $\ell \ge 1$.
We construct in polynomial time a PA $\A = (k+1, \Sigma, M, \alpha, \eta)$ such that
 there is a set $Q = \{q_1, \ldots, q_\ell\} \subseteq [0,1]^d$ with $\conv(Q) \supseteq P$
  if and only if
 there is a PA $\A' = (\ell+1, \Sigma, M', \alpha', \eta')$ equivalent to~$\A$.
Take $\Sigma := \{a_2, \ldots, a_{k}\} \cup \{b_1, \ldots, b_d\}$.
Set
 $M(a_i)[1, i] := 1$ and
 $M(b_s)[i, k+1] := p_i[s]$ for all $i \in \{2, \ldots, k\}$ and all $s \in \set{d}$,
  and set all other entries of~$M$ to~$0$.
Set $\alpha := e(1)$ and $\eta := e(k+1)^T$.
Figure~\ref{fig-redCubeAut} shows an example of this reduction.
We prove the correctness of this reduction in \iftechrep{Appendix~\ref{app-prop-redCubeAut}}{\cite{KW14-icalp-report}}.
\begin{figure}
\begin{tikzpicture}[scale=2.3,baseline=(middle)]
\coordinate (00) at (0,0);
\coordinate (01) at (0,1);
\coordinate (10) at (1,0);
\coordinate (11) at (1,1);
\coordinate (middle) at (0.5,0.5);
\draw (00) -- (01) -- (11) -- (10) -- (00);
\draw[fill] (00) circle (0.03);
\node at (0.08,0.1) {$p_1$};
\draw[fill] (0,3/4) circle (0.03);
\node at (0.1,0.75) {$p_2$};
\draw[fill] (1/4,1/2) circle (0.03);
\node at (0.35,0.5) {$p_3$};
\draw[fill] (1/2,1/4) circle (0.03);
\node at (0.6,0.2) {$p_4$};
\draw[fill] (1/2,3/4) circle (0.03);
\node at (0.6,0.8) {$p_5$};
\draw[thick] (0,0) -- (1,1/2) -- (0,1) -- (0,0);
\end{tikzpicture}
\hfill
\begin{tikzpicture}[scale=2,baseline,>=stealth',every state/.style={minimum size=0.3,inner sep=1}]
\node[state] (1) at (0,0) {$1$};
\node[state] (2) at (1,1) {$2$};
\node[state] (3) at (1,0.5) {$3$};
\node[state] (4) at (1,-0.5) {$4$};
\node[state] (5) at (1,-1) {$5$};
\node[state,accepting] (6) at (2,0) {$6$};
\draw[->] (-0.3,0) -- (1);
\draw[->] (1) -- node[above=2mm] {$1 a_2$} (2);
                                 \draw[->] (2) -- node[above=2mm,pos=0.6] {$\frac34 b_2$} (6);
\draw[->] (1) -- node[below] {$1 a_3$} (3);
                                 \draw[->] (3) -- node[below=-0.5mm,pos=0.25] {$\frac14 b_1$} node[below=-1mm,pos=0.55] {$\frac12 b_2$} (6);
\draw[->] (1) -- node[above] {$1 a_4$} (4);
                                 \draw[->] (4) -- node[above=-0.5mm,pos=0.25] {$\frac12 b_1$} node[above=-1mm,pos=0.55] {$\frac14 b_2$} (6);
\draw[->] (1) -- node[below=1mm] {$1 a_5$} (5);
                                 \draw[->] (5) -- node[below=1mm,pos=0.3] {$\frac12 b_1$} node[below=1mm,pos=0.6] {$\frac34 b_2$} (6);
\end{tikzpicture}
\hfill
\begin{tikzpicture}[scale=1.9,baseline,>=stealth',every state/.style={minimum size=0.3,inner sep=1}]
\node[state] (1) at (0,0) {$1$};
\node[state] (2) at (1,1) {$2$};
\node[state] (3) at (1,-1) {$3$};
\node[state,accepting] (4) at (2,0) {$4$};
\draw[->] (-0.3,0) -- (1);
\draw[->] (1) -- node[above=1mm,pos=0.1] {$\frac34 a_2$} node[above=1mm,pos=0.4] {$\frac38 a_3$} node[above=1mm,pos=0.7] {$\frac12 a_5$} (2);
                                 \draw[->] (2) --                                   node[above=2mm,pos=0.8] {$1 b_2$} (4);
\draw[->] (1) -- node[below=1mm,pos=0.1] {$\frac14 a_3$} node[below=0.5mm,pos=0.4] {$\frac12 a_4$} node[below=0.5mm,pos=0.7] {$\frac12 a_5$} (3);
                                 \draw[->] (3) -- node[below=1mm,pos=0.4] {$1 b_1$} node[below=1mm,pos=0.8] {$\frac12 b_2$} (4);
\end{tikzpicture}
\caption{Reduction from the hypercube problem to the minimisation problem.
The left figure shows an instance of the hypercube problem with $d=2$ and
 $P = \{p_1, \ldots, p_5\} = \{(0,0), (0,\frac34), (\frac14,\frac12), (\frac12,\frac14), (\frac12,\frac34)\}$.
It also suggests a set $Q = \{(0,0), (0,1), (1,\frac12)\}$ with $\conv(Q) \supseteq P$.
The middle figure depicts the PA~$\A$ obtained from~$P$.
The right figure depicts a minimal equivalent PA~$\A'$, corresponding to the set~$Q$ suggested in the left figure.
}
\label{fig-redCubeAut}
\end{figure}
\qed
\end{proof}

Next we show that the hypercube problem is NP-hard, which together with Proposition~\ref{prop-redCubeAut} implies Theorem~\ref{thm-NPhardness}.
A related problem is known\footnote{The authors thank Joseph O'Rourke for pointing out \cite{DasGoodrich95}.} to be NP-hard:
\begin{theorem}[Theorem~4.2 of \cite{DasGoodrich95}] \label{thm-DasGoodrich}
 Given two nested convex polyhedra in three dimensions, the problem of nesting a convex polyhedron with minimum faces between the two polyhedra is NP-hard.
\end{theorem}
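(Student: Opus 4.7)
The plan is to prove NP-hardness by reduction from a purely combinatorial NP-hard problem, most naturally \textsc{Set Cover} (or \textsc{Vertex Cover}), translating it into a question about the minimum number of facets of a nested 3D polyhedron via a geometric gadget construction. Since the statement concerns three dimensions, I would set up a construction in which the two nested polyhedra $P \subseteq Q$ are almost coincident everywhere \emph{except} at a polynomial number of localised ``features'' that encode the ground elements of the combinatorial instance.

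Concretely, given a \textsc{Set Cover} instance with ground set $\{e_1,\ldots,e_n\}$ and subsets $S_1,\ldots,S_m$, I would build $P$ as a convex body with $n$ designated boundary features $x_1,\ldots,x_n$ (small outward bumps or carefully placed extreme points), and build $Q$ so that there is a canonical finite family of candidate supporting half-spaces $H_1,\ldots,H_m$, one per subset, with the property that $H_j$ lies between $P$ and $Q$ at exactly the features $\{x_i : e_i \in S_j\}$. If $R$ is any convex polyhedron with $P \subseteq R \subseteq Q$, then each feature $x_i$ must be separated from the exterior of $Q$ by some facet of $R$, and that facet, by construction, can ``cover'' only features corresponding to a single $S_j$. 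Thus a nested polyhedron with $k$ facets yields a set cover of size $\le k$, and conversely a set cover of size $k$ yields a nested polyhedron with $O(k)$ facets, so minimising facets is polynomially equivalent to \textsc{Set Cover}.

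The first step I would carry out is the geometric layout: choose $P$ and $Q$ so close together outside the feature regions that no facet of any nested $R$ can usefully play more than a local role near one feature cluster, and make the feature bumps ``spiky'' so that only half-spaces very close to the canonical $H_j$ can separate them. The second step is the soundness argument: from any $R$ with few facets, extract a cover by ``rounding'' each relevant facet of $R$ to the unique $H_j$ it must approximate. The third is the routine polynomial-bit-complexity check on the coordinates of the construction.

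The main obstacle, and the reason this cannot be done in the plane, is the \emph{rigidity} of the facet-to-subset correspondence in three dimensions: the space of planes supporting a subset of features is a continuum, so one must rule out ``unintended'' facets that coincidentally separate multiple features not jointly covered by any $S_j$. This is the delicate part of the construction and requires carefully calibrated narrow ``slab'' regions between $P$ and $Q$ away from the features, together with sharp spikes at each $x_i$, so that the only geometrically feasible separating planes are (arbitrarily close to) those of the canonical family $\{H_j\}$. Once this rigidity is established, the reduction is straightforward; everything else is bookkeeping.
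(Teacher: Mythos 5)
This statement is not proved in the paper at all: it is quoted verbatim as Theorem~4.2 of Das and Goodrich~\cite{DasGoodrich95} and used only as background context motivating the authors' own, different hardness result (Proposition~\ref{prop-cube-NPhardness}, which concerns minimising \emph{vertices} inside a \emph{hypercube} and is proved by a reduction from 3SAT). So there is no ``paper's own proof'' here to compare your attempt against; you were asked to supply a proof of a cited external theorem.

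As to the proposal itself, there is a concrete obstruction you do not address. In your reduction the family of ``coverable'' subsets of features --- those subsets $T$ for which a single valid facet can simultaneously cover all features in $T$ --- is determined geometrically, not freely prescribable. A facet is a plane that keeps all of $P$ on one side, so in the standard point--plane duality each constraint ``the plane covers feature $x_i$'' and each constraint ``the plane does not cut into $P$'' corresponds to a convex region in the three-dimensional space of planes. By Helly's theorem in $\R^3$, if every four of these convex regions have a common point then they all do; concretely, if every four-element subset of some feature set $T$ is jointly coverable by a facet, then all of $T$ is. Arbitrary \textsc{Set Cover} instances have no such Helly-type closure property, so one cannot in general realise a prescribed set system $\{S_1,\ldots,S_m\}$ as ``exactly the coverable subsets'' by placing bumps and choosing candidate half-spaces, no matter how spiky the bumps are. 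This is the reason the step you flag as ``delicate'' is not merely delicate but, as stated, blocked. A correct reduction must start from an NP-hard problem whose instances are compatible with this geometric Helly constraint (Das and Goodrich in fact reduce from a planar SAT variant with a considerably more intricate gadget construction), or must argue much more carefully that the specific set systems produced by the chosen source problem always admit a geometric realisation in $\R^3$ --- and neither is done in the proposal.
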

Note that this NP-hardness result holds even in $d=3$ dimensions.
However, the outer polyhedron is not required to be a cube, and the problem is about minimising
 the number of faces rather than the number of vertices.
Using a completely different technique we show:
\newcommand{\stmtpropcubeNPhardness}{
 The hypercube problem is NP-hard.
 This holds even for the restricted hypercube problem.
}
\begin{proposition} \label{prop-cube-NPhardness}
\stmtpropcubeNPhardness
\end{proposition}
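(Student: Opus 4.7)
My plan is to give a polynomial-time reduction from an NP-hard problem to the restricted hypercube problem; a natural candidate for the source problem is 3SAT (or alternatively Set Cover). Given a 3SAT instance $\phi$ with $n$ variables and $m$ clauses, I would build an instance $(d, P, \ell)$ with $d$ and $\ell$ polynomial in $n$ and $m$, and with the origin $(0,\ldots,0)$ explicitly included in $P$ so that the construction witnesses NP-hardness even for the restricted version.

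The encoding I would try first is to let each hypercube vertex of $\{0,1\}^d$ represent a truth assignment (possibly after padding with auxiliary coordinates). For each clause $C_j$ I would place one or more ``witness'' points $p_{C_j}\in[0,1]^d$ in the relative interior of the face of the hypercube whose vertices are exactly the satisfying assignments of $C_j$. The key geometric fact is that such a point cannot be written as a convex combination of points lying in the complementary (closed) half-space, so any $Q$ with $\conv(Q)\supseteq\{p_{C_j}\}$ must contain at least one point on the ``satisfying'' side. Choosing $\ell$ to match the number of assignments required (for instance $\ell = n{+}1$, or $\ell$ tied to the 3SAT instance in a tighter way), the intended correspondence is: a satisfying assignment yields a feasible $Q$ (completeness), and any feasible $Q$ certifies satisfiability (soundness).

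The main obstacle is that $Q$ is not required to consist of hypercube vertices, only of points in $[0,1]^d$. So the soundness direction needs a rounding/purification argument: either show that any feasible $Q$ can be replaced by a $Q'\subseteq\{0,1\}^d$ of the same or smaller cardinality preserving $\conv(Q')\supseteq P$, or directly extract a satisfying assignment from a possibly fractional $Q$ by reading off which points lie on the satisfying face of each $p_{C_j}$. This is the step that will likely require the most delicate choice of the points $p_{C_j}$ — for example, placing them at specific centroids of exposed facets, or introducing one extra ``clause coordinate'' per clause so that covering $p_{C_j}$ geometrically forces one coordinate of some $q_i$ to be $1$ while all others involved are $0$, effectively snapping the relevant $q_i$ onto a hypercube vertex.

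Once the reduction and its correctness are established, the polynomial-time bound is immediate from the construction (the description of $P$ is of size polynomial in $n$ and $m$, with all coordinates rational of bit-length polynomial in the input), and since the origin lies in $P$ the argument simultaneously yields NP-hardness of the restricted hypercube problem, giving both claims of the proposition.
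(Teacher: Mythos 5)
Your instinct to reduce from 3SAT is the right one---that is exactly the source problem the paper uses---but the first encoding you propose contains a geometric error and the alternative you float is not developed enough to carry the soundness direction.

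The main issue is the sentence ``place one or more witness points in the relative interior of the face of the hypercube whose vertices are exactly the satisfying assignments of $C_j$.'' For a $3$-clause, the \emph{falsifying} assignments of $C_j$ form a face of $\{0,1\}^d$ (a codimension-$3$ face, fixing three coordinates), but the \emph{satisfying} assignments are the complement of that face, which is not itself a face: it has $7\cdot 2^{d-3}$ vertices, not a power of two. So there is no face to put your witness in, and the supporting-hyperplane argument (``such a point cannot be written as a convex combination of points on the other side'') has nothing to lean on. Your fallback idea---``one extra clause coordinate per clause so that covering $p_{C_j}$ snaps some $q_i$ onto a vertex''---is closer to what actually works, but it is stated as an aspiration rather than a construction, and crucially the same snapping problem resurfaces: with one coordinate per variable, a fractional $q_i$ can sit strictly inside the cube and still help cover several clause witnesses at once, so nothing forces it onto a vertex and the ``read off an assignment'' step is ungrounded.

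The paper circumvents this by not encoding assignments as hypercube vertices at all. It uses $d = 3N+M$ coordinates: three dedicated coordinates $\spco{x_i},y_i,z_i$ per variable and one coordinate $\spco{c_j}$ per clause. Literals are encoded via $f(x_i^-)=e(y_i)$ and $f(x_i^+)=e(y_i)+e(z_i)$, and $P$ contains, besides a witness $p(x_i)$ per variable and $p(c_j)$ per clause, a large set $P_{\mathit{var}}\cup P_{\mathit{cla}}$ of actual hypercube \emph{vertices}. Any feasible $Q$ must contain all of $P_{\mathit{var}}\cup P_{\mathit{cla}}$ outright (vertices of the outer cube that lie in $P$ cannot be produced by a convex combination of other cube points), and with $\ell=3N+3M$ exactly $N$ points remain in the budget. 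A support argument on the coordinates $\spco{x_i},y_i,z_i$ then forces these $N$ remaining points to have the shape $s_i$ with $s_i[y_i]=1$ and $s_i[z_i]\in[0,1]$, one per variable, and from the clause witnesses $p(c_j)$ one reads off $s_i[z_i]\in\{0,1\}$ on the variables that matter for each clause, yielding the assignment. So the ``snapping'' you hope for is achieved not by the geometry of one clause gadget alone but by the interaction of the forced vertex set, the tight budget $\ell$, and the three-coordinate-per-variable gadget. Your sketch is missing all three of these ingredients.

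Finally, putting the origin into $P$ by fiat is not innocuous: the origin is a hypercube vertex, so it would have to lie in $Q$ as well, which perturbs the budget accounting that the soundness argument depends on. The paper instead builds the reduction without the origin, notes that $P$ contains hypercube vertices, and applies an affine coordinate flip (per-coordinate $t\mapsto 1-t$) to move one of those vertices to the origin; this preserves the combinatorics and gives the restricted version for free. You would need the same kind of post-processing step rather than simply declaring $0\in P$.
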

The proof is by a reduction from 3SAT, see \iftechrep{Appendix~\ref{app-prop-cube-NPhardness}}{\cite{KW14-icalp-report}}.

\begin{remark} \label{rem-hypercube}
The hypercube problem is in PSPACE, by appealing to decision algorithms for~$\ExThR$, the existential fragment of the first-order theory of the reals.
For every fixed~$d$ the hypercube problem is\footnote{This observation is in part due to Radu Grigore.} in~$P$,
 exploiting the fact that $\ExThR$ can be decided in polynomial time, if the number of variables is fixed.
(For $d=2$ an efficient algorithm is provided in~\cite{Aggarwal89}.)
%For $d=2$ the hypercube (``square'') problem is in~$P$, using, e.g., the algorithm of~\cite{Aggarwal89}.
%The authors do not know whether there is a fixed~$d$ for which the hypercube problem is NP-hard.
It is an open question whether the hypercube problem is in~NP.
It is also open whether the search for a minimum~$Q$ can be restricted to sets of points with rational coordinates
 (this holds for $d=2$).
\end{remark}

Propositions \ref{prop-redCubeAut} and~\ref{prop-cube-NPhardness} together imply Theorem~\ref{thm-NPhardness}.

\subsection{Reduction to the Existential Theory of the Reals} \label{sub-PSPACE}

In this section we reduce the PA minimisation problem to $\ExThR$, the existential fragment of the first-order theory of the reals.
A formula of~$\ExThR$ is of the form
 $\exists x_1 \ldots \exists x_m R(x_1, \ldots, x_n)$, where $R(x_1, \ldots, x_n)$ is a boolean combination of comparisons of the form
 $p(x_1, \ldots, x_n) \sim 0$, where $p(x_1, \ldots, x_n)$ is a multivariate polynomial and
  $\mathord{\sim} \in \{ \mathord{<}, \mathord{>}, \mathord{\le}, \mathord{\ge}, \mathord{=}, \mathord{\ne} \}$.
The validity of closed formulas ($m=n$) is decidable in PSPACE~\cite{Can88,Renegar92},
 and is not known to be PSPACE-hard.

\newcommand{\stmtpropPSPACE}{
 Let $\A_1 = (n_1, \Sigma, M_1, \alpha_1, \eta_1)$ be a PA.
 A PA \hbox{$\A_2 = (n_2, \Sigma, M_2, \alpha_2, \eta_2)$} is equivalent to~$\A_1$ if and only if
  there exist matrices $\MF(a) \in \R^{(n_1+n_2) \times (n_1+n_2)}$ for $a \in \Sigma$
  and a matrix $F \in \R^{(n_1+n_2) \times (n_1+n_2)}$ such that
  $F[1,\cdot] = (\alpha_1, \alpha_2)$, and $F (\eta_1^T, - \eta_2^T)^T = (0, \ldots, 0)^T$,
  and
  \[
   F \begin{pmatrix} M_1(a) & 0 \\ 0 & M_2(a) \end{pmatrix} = \MF(a) F \qquad \text{for all $a \in \Sigma$.}
  \]
}
\begin{proposition} \label{prop-PSPACE}
\stmtpropPSPACE
\end{proposition}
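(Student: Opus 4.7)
The plan is to reduce both directions to an analysis of the ``difference'' WA. Form the direct-sum WA $\A := (n_1+n_2, \Sigma, M, \alpha, \eta)$ with block-diagonal transitions $M(a) := \begin{pmatrix} M_1(a) & 0 \\ 0 & M_2(a) \end{pmatrix}$, initial row vector $\alpha := (\alpha_1, \alpha_2)$, and final column vector $\eta := (\eta_1^T, -\eta_2^T)^T$. By block-multiplication and linearity, $L_\A(w) = L_{\A_1}(w) - L_{\A_2}(w)$, so $\A_1 \equiv \A_2$ holds if and only if $L_\A$ is identically zero. The three conditions on $F$ and $\MF$ now read: the first row of $F$ equals the initial vector of $\A$; $F\eta = 0$; and the row span of $F$ is $M(a)$-invariant for every $a\in\Sigma$, with $\MF(a)$ the induced map on that row span.

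For the ``if'' direction I would show, by induction on $|w|$, that $\alpha M(w)$ always lies in the row span of $F$. The base case $\alpha = F[1,\cdot]$ is the first condition. For the inductive step, if $\alpha M(w) = v F$ for some row vector $v$, then the third condition gives $\alpha M(w) M(a) = v F M(a) = v \MF(a) F$, again in the row span of $F$. Consequently $L_\A(w) = \alpha M(w)\eta = v F\eta = 0$ by the second condition, so $L_{\A_1} = L_{\A_2}$.

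For the ``only if'' direction, assume $L_\A \equiv 0$, and let $\FV := \langle \alpha M(w) \mid w \in \Sigma^*\rangle$ be the forward space of~$\A$, of some dimension $k \le n_1+n_2$. I would construct $F$ by choosing its first row to be $\alpha$, extending this to a basis $f_1,\ldots,f_k$ of $\FV$ placed in the first $k$ rows, and padding rows $k{+}1,\ldots,n_1+n_2$ with zeros so that $F$ is square of the required size. The first condition then holds by construction. Since $\FV$ is $M(a)$-invariant, each nonzero row satisfies $f_i M(a) \in \FV = \textup{rowspan}(F)$, producing coefficients $\MF(a)[i,\cdot]$; zero rows use zero coefficients, yielding the third condition. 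For the second condition, every $f_i \in \FV$ is a finite combination $\sum_w c_w \alpha M(w)$, hence $f_i \eta = \sum_w c_w L_\A(w) = 0$; zero rows trivially satisfy this.

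The key conceptual point, and thus the only real obstacle, is to notice that the proposition is just the invariant-subspace characterisation behind Proposition~\ref{prop-equivalence} applied to the difference WA~$\A$, and then to handle the square-matrix formatting by padding with zero rows so that $F$ and the $\MF(a)$ inhabit $\R^{(n_1+n_2)\times(n_1+n_2)}$ as required. Everything else is routine linear algebra.
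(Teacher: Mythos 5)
Your proof follows essentially the same route as the paper's: form the block-diagonal ``difference'' WA, observe that equivalence of $\A_1$ and $\A_2$ is the same as this WA having identically zero language, and then characterise zeroness by exhibiting an $M(a)$-invariant row space containing $\alpha$ and annihilated by $\eta$, encoded as the matrix $F$ together with the induced maps $\MF(a)$. The only difference is cosmetic: the paper cites Tzeng for the forward-space orthogonality criterion while you prove both implications directly (induction for ``if'', explicit basis-plus-padding construction for ``only if''), which is a perfectly valid way to fill in the details the paper leaves to the reference.
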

The proof is in \iftechrep{Appendix~\ref{app-prop-PSPACE}}{\cite{KW14-icalp-report}}.
The conditions of Proposition~\ref{prop-PSPACE} on~$\A_2$, including that it be a PA, can be phrased in~$\ExThR$.
Thus it follows:
\begin{theorem} \label{thm-PSPACE}
 The PA minimisation problem can be reduced in polynomial time to~$\ExThR$.
 Hence, PA minimisation is in PSPACE.
\end{theorem}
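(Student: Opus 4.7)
The plan is to use Proposition~\ref{prop-PSPACE} to encode, as a single polynomial-size formula $\varphi$ of $\ExThR$, the question whether there exists a PA~$\A_2$ of size $n_2 := n'$ equivalent to the input PA $\A_1 = (n_1, \Sigma, M_1, \alpha_1, \eta_1)$. Since validity of closed $\ExThR$ formulas is in PSPACE~\cite{Can88,Renegar92}, a polynomial-time reduction will establish the claimed upper bound, and hence PA minimisation will be in PSPACE.

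The formula $\varphi$ existentially quantifies over the real variables describing the candidate PA $\A_2 = (n_2, \Sigma, M_2, \alpha_2, \eta_2)$, namely the entries of $\alpha_2$, of $\eta_2$, and of each $M_2(a)$ for $a \in \Sigma$, together with the entries of the witness matrix $F \in \R^{(n_1+n_2) \times (n_1+n_2)}$ and of each $\MF(a) \in \R^{(n_1+n_2) \times (n_1+n_2)}$. The total number of variables is $O((n_1+n_2)^2 (|\Sigma|+1))$, which is polynomial in the input size. The quantifier-free body is the conjunction of the following atomic constraints: (i) the PA conditions on~$\A_2$, i.e., non-negativity of all entries of $\alpha_2$, $\eta_2$ and the $M_2(a)$, the row-sum bounds $\sum_j \alpha_2[j] \le 1$ and $\sum_j M_2(a)[i,j] \le 1$, and $\eta_2[i] \le 1$; (ii) the linear equalities $F[1,\cdot] = (\alpha_1, \alpha_2)$; (iii) the linear equalities $F(\eta_1^T, -\eta_2^T)^T = (0, \ldots, 0)^T$; and (iv) for every $a \in \Sigma$ the $(n_1+n_2)^2$ entrywise bilinear equalities coming from
\[
 F \begin{pmatrix} M_1(a) & 0 \\ 0 & M_2(a) \end{pmatrix} \;=\; \MF(a) \, F .
\]
Each atomic constraint has degree at most~$2$ and the total number of constraints is polynomial in the input, so $\varphi$ can be written down in polynomial time.

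By Proposition~\ref{prop-PSPACE}, the closed $\ExThR$ formula $\varphi$ holds iff there is a PA of size $n_2$ equivalent to $\A_1$, i.e., iff $(\A_1, n')$ is a positive instance of the PA minimisation problem. Combining this polynomial-time reduction with the PSPACE decision procedure for $\ExThR$ yields the theorem. There is no real technical obstacle once Proposition~\ref{prop-PSPACE} is in hand; the reduction is a direct syntactic encoding. The one conceptually important point is that the witness matrices $F$ and $\MF(a)$ are quantified over $\R$, not over~$\Q$, which is precisely why we must route the decision through $\ExThR$ rather than simply guessing rational witnesses in NP.
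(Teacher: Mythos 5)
Your proposal is correct and follows the same route as the paper: invoke Proposition~\ref{prop-PSPACE} and phrase its conditions on~$\A_2$, together with the stochasticity constraints making $\A_2$ a PA, as a polynomial-size $\ExThR$ sentence, then appeal to the PSPACE decision procedure for~$\ExThR$. You merely spell out the encoding (variable count, degree-2 constraints) that the paper compresses into a single sentence, and your closing remark about why the witnesses must range over~$\R$ rather than~$\Q$ correctly echoes the paper's discussion of why NP membership is not immediate.
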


%This result, combined with Proposition~\ref{prop-redCubeAut},
% yields a reduction of the hypercube problem to~$\ExThR$,
% but such a reduction could be obtained more directly.

Theorem~\ref{thm-PSPACE} improves on a result in~\cite{MateusQL12} where the minimisation problem was shown to be in EXPTIME.
(More precisely, Theorem~4 of~\cite{MateusQL12} states that a minimal PA can be computed in EXPSPACE,
 but the proof reveals that the decision problem can be solved in EXPTIME.)

\section{Conclusions and Open Questions} \label{sec-conclusions}

We have developed a numerically stable and efficient algorithm for minimising WAs, based on linear algebra
 and Brzozowski-like automata minimisation.
We have given bounds on the minimisation error in terms of both the machine epsilon and the error tolerance parameter~$\tau$.

We have shown NP-hardness for PA minimisation, and have given a polynomial-time reduction to~$\ExThR$.
Our work leaves open the precise complexity of the PA minimisation problem.
The authors do not know whether the search for a minimal PA can be restricted to PAs with rational numbers.
As stated in the Remark after Proposition~\ref{prop-cube-NPhardness}, the corresponding question is open even for the hypercube problem.
If rational numbers indeed suffice, then an NP algorithm might exist that guesses the (rational numbers of the) minimal PA and checks for equivalence
 with the given PA.
Proving PSPACE-hardness would imply PSPACE-hardness of~$\ExThR$, thus solving a longstanding open problem.

For comparison, the corresponding minimisation problems involving WAs (a generalisation of PAs)
 and DFAs (a special case of PAs) lie in~$P$.
More precisely, minimisation of WAs (with rational numbers) is in randomised NC~\cite{13KMOWW-LMCS},
 and DFA minimisation is NL-complete~\cite{ChoH92}.
NFA minimisation is PSPACE-complete~\cite{Jiang93}.

\medskip
\noindent \textbf{Acknowledgements.} The authors would like to thank James Worrell, Radu Grigore, and Joseph O'Rourke for valuable discussions,
and the anonymous referees for their helpful comments.
Stefan Kiefer is supported by a Royal Society University Research Fellowship.

\bibliographystyle{plain}
\bibliography{db}

\iftechrep{
\newpage
\appendix
\section{Proofs of Section~\ref{sec-minwei}}

\subsection{Proof of Proposition~\ref{prop-equivalence}}

\begin{qproposition}{\ref{prop-equivalence}}
 \stmtpropequivalence
\end{qproposition}
\begin{proof}
 Observe that the equalities~\eqref{eq-commutativity} extend inductively to words:
 \begin{equation}
  F M(w) = \MF(w) F \quad \text{and} \quad M(w) B = B \MB(w) \quad \text{for all $w \in \Sigma^*$.}
   \label{eq-commutativity-words}
 \end{equation}
 Using \eqref{eq-commutativity-words} and the definition of~$\alphaF$ we have for all $w \in \Sigma^*$:
 \[
     L_\A(w)
     = \alpha M(w) \eta
     = \alphaF F M(w) \eta
     = \alphaF \MF(w) F \eta
     = L_{\AF}(w)\,.
 \]
 Symmetrically one can show $L_\A = L_{\AB}$.
\qed
\end{proof}

\subsection{Proof of Theorem~\ref{thm-WA-minimal}} \label{app-brzozowski}

We will use the notion of a \emph{Hankel matrix} \cite{BerstelReutenauer,Beimel00}:
\begin{definition}
 Let $L : \Sigma^* \to \R$.
 The \emph{Hankel matrix} of~$L$ is the matrix \mbox{$H^L \in \R^{\Sigma^* \times \Sigma^*}$}
  with $H^L[x,y] = L(x y)$ for all $x,y \in \Sigma^*$.
 We define $\rank(L) := \rank(H^L)$.
\end{definition}
%The following propositions are known (see Carlyle/Paz'71, Fliess'74, Beimel et al.'00).
We have the following proposition:
\begin{proposition} \label{prop-direction-1}
 Let $\A$ be an automaton of size~$n$. Then $\rank(L_\A) \le n$.
\end{proposition}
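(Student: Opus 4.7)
The plan is to exhibit a rank-$n$ factorisation of the Hankel matrix $H^{L_\A}$ through~$\R^n$, which will immediately give the bound. The factorisation is the obvious one suggested by the semantics of~$\A$: for any words $x, y \in \Sigma^*$, we have
\[
 H^{L_\A}[x,y] \;=\; L_\A(xy) \;=\; \alpha M(xy) \eta \;=\; \alpha M(x) \cdot M(y) \eta,
\]
where $\alpha M(x) \in \R^n$ is a row vector and $M(y) \eta \in \R^n$ is a column vector.

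Concretely, I would define two matrices indexed by words: the matrix $U \in \R^{\Sigma^* \times n}$ whose row indexed by~$x \in \Sigma^*$ is $\alpha M(x)$, and the matrix $V \in \R^{n \times \Sigma^*}$ whose column indexed by~$y \in \Sigma^*$ is $M(y) \eta$. The computation above shows $H^{L_\A} = U V$. Since this is a product of matrices whose common inner dimension is~$n$, we obtain $\rank(H^{L_\A}) \le n$, and therefore $\rank(L_\A) = \rank(H^{L_\A}) \le n$ by definition.

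There is essentially no obstacle: the result follows from a one-line factorisation together with the elementary fact that the rank of a product is bounded by the inner dimension. One only has to be slightly careful about the fact that $U$ and $V$ are infinite matrices (indexed by~$\Sigma^*$), but the rank bound $\rank(UV) \le n$ still holds because every finite submatrix of $UV$ factors through~$\R^n$, so the column space of~$UV$ is contained in the column space of~$U$, which has dimension at most~$n$.
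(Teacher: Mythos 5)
Your proof is correct and takes essentially the same approach as the paper: the paper also factors the Hankel matrix as $\Fh\Bh$ where $\Fh[w,\cdot]=\alpha M(w)$ and $\Bh[\cdot,w]=M(w)\eta$, and then bounds the rank by the inner dimension~$n$. Your remark about infinite index sets is a sound precaution, though the paper leaves it implicit.
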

\begin{proof}
 Consider the matrices $\Fh: \R^{\Sigma^* \times n}$ and $\Bh: \R^{n \times \Sigma^*}$ with
  $\Fh[w,\cdot] := \alpha M(w)$ and $\Bh[\cdot,w] := M(w) \eta$ for all $w \in \Sigma^*$.
 Note that $\rank(\Fh) \le n$ and $\rank(\Bh) \le n$.
 Let $x,y \in \Sigma^*$.
 Then $(\Fh \Bh)[x,y] = \alpha M(x) M(y) \eta = L_\A(x y)$, so $\Fh \Bh$ is the Hankel matrix of~$L_\A$.
 Hence $\rank(L_\A) = \rank(\Fh \Bh) \le \min\{\rank(\Fh), \rank(\Bh)\} \le n$.
\qed
\end{proof}
Now we can prove the theorem:
\begin{qtheorem}{\ref{thm-WA-minimal}}
 \stmtthmWAminimal
\end{qtheorem}
\begin{proof}
W.l.o.g.\ we assume $\A' = \ABF$.
Let $\A = (n, \Sigma, M, \alpha, \eta)$.
Let $\AB = (\nB, \Sigma, \MB, \alpha B, \etaB)$ be a backward automaton of~$\A$ with base~$B$.
Let $\ABF$ % = \left(\nBF, \Sigma, \MBF, (1,0,\ldots,0), \etaBF\right)$
 be a forward automaton of~$\AB$ with base~$\Ft$.
Equivalence of $\A$, $\AB$ and~$\ABF$ follows from Proposition~\ref{prop-equivalence}.
Assume that $\ABF$ has $\nBF$ states.
For minimality, by Proposition~\ref{prop-direction-1}, it suffices to show $\nBF = \rank(H)$, where $H$ is the Hankel matrix of~$L_\A$.
Let $\Fh$ and $\Bh$ be the matrices from the proof of Proposition~\ref{prop-direction-1}.
We have:
\begin{align*}
  \nBF
   & = \rank (\Ft) && \text{(definition of~$\nBF$)} \\
   & = \dim \langle \alpha B  \MB(w) \mid w \in \Sigma^* \rangle && \text{(definition of~$\Ft$)} \\
   & = \dim \langle \alpha M(w) B \mid w \in \Sigma^* \rangle && \text{(by~\eqref{eq-commutativity-words})} \\
   & = \rank(\Fh B) && \text{(definition of~$\Fh$)} \\
   & = \dim \langle \Fh M(w) \eta \mid w \in \Sigma^* \rangle && \text{(definition of~$B$)} \\
   & = \rank(\Fh \Bh) && \text{(definition of~$\Bh$)} \\
   & = \rank(H) && \text{(proof of Proposition~\ref{prop-direction-1})\,.}
\end{align*}
\qed
\end{proof}

\subsection{Instantiation of ArnoldiReduction with Householder Reflectors} \label{app-refinement}

Fix $n \ge 1$.
For a row vector $x \in \R^k$ with $k \in \N_n$ and $\norm{x} \ne 0$, the \emph{Householder reflector $P$ for~$x$} is defined
 as the matrix
\[
 P = \begin{pmatrix} I_{n-k} & \ 0 \\ 0 & \ R \end{pmatrix} \in \R^{n \times n}\;,
\]
 where $R = I_k - 2 v^T v \in \R^{k \times k}$, and
\[
 v = \frac{(x[1] + \sign(x[1]) \norm{x}, x[2], \ldots, x[k])}{\norm{(x[1] + \sign(x[1]) \norm{x}, x[2], \ldots, x[k])}} \in \R^k \;,
\]
 where $\sign(r) = +1$ if $r \ge 0$ and $-1$ otherwise.
 (The careful choice of the sign here is to ensure numerical stability.)
To understand this definition better, first observe that $v$ is a row vector with $\norm{v} = 1$.
It is easy to verify that $R$ and thus $P$ are orthonormal and symmetric.
Moreover, we have $R R = I_k$ and thus $P P = I_n$, i.e., $P = P^T = P^{-1}$.
Geometrically, $R$ describes a reflection about the hyperplane through the origin and orthogonal to $v \in \R^{k}$~\cite[Chapter~19.1]{Higham}.
Crucially, the vector~$v$ is designed so that $R$ reflects~$x$ onto the first axis, i.e., $x R = (\pm \norm{x}, 0, \ldots, 0)$.

\begin{figure}[h]
\begin{minipage}{\textwidth}
 function HouseholderReduction \\
 input: $\alpha \in \R^n$; %with $\norm{\alpha} = 1$;
        $M: \Sigma \to \R^{n \times n}$;
        error tolerance parameter $\tau \ge 0$ \\
 output: canonical forward reduction $(F, \MF)$ with $F \in \R^{\nF \times n}$ and $\MF: \Sigma \to \R^{\nF \times \nF}$ \\
 \ind $\Ps{1}$ := Householder reflector for~$\alpha  / \norm{\alpha}$ \\
 \ind $\ell$ := $0$; $j := 1$; $f_1 := e(1) \Ps{1}$ \\ %; $\ws{1} := \varepsilon$ \\
 \ind while $\ell < j$ do \\
 \indd $\ell := \ell + 1$ \\
 \indd for $a \in \Sigma$ do \\
 \inddd $\MF(a)[\ell, \cdot] := f_\ell M(a) \Ps{1} \cdots \Ps{j}$ \\
 \inddd if $j + 1 \le n$ and $\norm{\MF(a)[\ell,j{+}1..n]} > \tau$ \\
 \indddd $j := j+1$ \\ %; $\ws{j} := \ws{\ell} a$ \\
 \indddd $\Ps{j}$ := Householder reflector for $\MF(a)[\ell,j..n]$ \\
 \indddd $\MF(a)[\ell, \cdot] := \MF(a)[\ell, \cdot] \Ps{j}$ \\
 \indddd $f_j := e(j) \Ps{j} \cdots \Ps{1}$\\
 \ind $\nF := j$; form $F \in \R^{\nF \times n}$ with rows $f_{1}, \ldots, f_{\nF}$ \\
 \ind return $F$ and $\MF(a)[1..\nF,1..\nF]$ for all $a \in \Sigma$
\end{minipage}
\caption{Instantiation of ArnoldiReduction (Fig.~\ref{fig-meta-algorithm}) using Householder reflectors.}
\label{fig-householder-tzeng}
\end{figure}

Figure~\ref{fig-householder-tzeng} shows an instantiation of the algorithm from Figure~\ref{fig-meta-algorithm} using Householder reflectors.
Below we prove correctness by showing that HouseholderReduction indeed refines ArnoldiReduction,
 assuming $\tau = 0$ for the error tolerance parameter.
%(We discuss the rationale for this parameter in Section~\ref{sub-min-house}.)
For efficiency it is important not to form the reflectors $P_1, P_2, \ldots$ explicitly.
If $P$ is the Householder reflector for $x \in \R^k$, it suffices to keep the vector $v \in \R^k$ from the definition of Householder reflectors.
A multiplication $y := y P$ (for a row vector $y \in \R^n$) can then be implemented in $O(n)$ with
$
 y[(n{-}k{+}1)..n] := y[(n{-}k{+}1)..n] - 2 \left( y[(n{-}k{+}1)..n] \cdot v^T \right) v
$.
This gives an $O(|\Sigma| n^3)$ number of arithmetic operations of HouseholderReduction.

\subsubsection{HouseholderReduction Instantiates ArnoldiReduction.}
Let $\Ps{1}, \ldots, \Ps{\nF} \in \R^{n \times n}$ be the reflectors computed in HouseholderReduction.
Recall that we have $\Ps{j}^T = \Ps{j}^{-1} = \Ps{j}$.
For $0 \le j \le \nF$ define
\[
 \Fs{j} := \Ps{j} \Ps{j-1} \cdots \Ps{1}\,. % \qquad \text{and} \qquad \MFs{j}(a) := \Fs{j} M(a) \Ps{1} \cdots \Ps{j} \,.
\]
The matrices $\Fs{j} \in \R^{n \times n}$ are orthonormal. % and we have
% \begin{equation}
%  \Fs{j} M(a) = \MFs{j}(a) \Fs{j}\,. \label{eq-HH-invariant}
% \end{equation}
Since for all $j < \nF$ the reflector $\Ps{j+1}$ leaves the first $j$ rows of~$\Fs{j}$ unchanged,
 the first $j$ rows of $\Fs{j}, \ldots, \Fs{\nF}$ coincide with the vectors $f_1, \ldots, f_j$ computed in HouseholderReduction.

First we consider the initialisation part before the while loop.
Since $\Ps{1}$ is the Householder reflector for~$\alpha/\norm{\alpha}$, we have
 $(\alpha / \norm{\alpha}) \Ps{1} = (\pm 1, 0, \ldots, 0)$.
It follows that we have $f_1 = e(1) \Ps{1} = \pm \alpha / \norm{\alpha}$, as in ArnoldiReduction.

Now we consider the while loop.
Consider an arbitrary iteration of the ``for $a \in \Sigma$ do'' loop.
Directly after the loop head we have $\MF(a)[\ell, \cdot] = f_\ell M(a) \Ps{1} \cdots \Ps{j}$,
 hence $f_\ell M(a) = \MF(a)[\ell, \cdot] \Fs{j}$.
As $\Fs{j}$ is orthonormal and its first $j$ rows coincide with $f_1, \ldots, f_j$,
 we have $f_\ell M(a) \in \langle f_1, \ldots, f_j \rangle$ if and only if $\MF(a)[\ell,j{+}1 .. n] = (0, \ldots, 0)$.
This corresponds to the ``if'' conditional in ArnoldiReduction.
It remains to be shown that $\MF(a)[\ell, \cdot]$ is defined as in ArnoldiReduction:
\begin{itemize}
\item
Let $\MF(a)[\ell, j{+}1 .. n] = (0, \ldots, 0)$.
Then at the end of the loop we have $f_\ell M(a) = \sum_{i=1}^j \MF(a)[\ell,i] f_i$,
 as required in ArnoldiReduction.
\item
Let $\MF(a)[\ell,j{+}1 .. n] \ne (0, \ldots, 0)$.
Then we have:
\begin{align*}
 f_\ell M(a)
 & = \MF(a)[\ell, \cdot] \Fs{j} && \text{(as argued above)} \\
 & = \MF(a)[\ell, \cdot] \Ps{j+1} \Ps{j+1} \Fs{j} && \text{(as $\Ps{j+1} \Ps{j+1} = I_n$)} \\
 & = \MF(a)[\ell, \cdot] \Ps{j+1} \Fs{j+1} && \text{(by definition of $\Fs{j+1}$)}
\end{align*}
Further, the reflector~$\Ps{j+1}$ is designed such that $\MF(a)[\ell,j{+}1..n] = (r, 0, \ldots, 0)$ for some $r \ne 0$.
So after increasing~$j$, and updating~$\MF(a)[\ell, \cdot]$, and defining the new vector $f_j$,
 at the end of the loop we have $f_\ell M(a) = \sum_{i=1}^j \MF(a)[\ell,i] f_i$,
 as required in ArnoldiReduction.
\end{itemize}
%The row $\MF(a)[\ell,\cdot]$ will not be changed in future iterations,
% because the reflectors $\Ps{j+1}, \ldots, \Ps{\nF}$ modify the row $\MF(a)[\ell,\cdot]$
%  only by taking linear combinations of its last $n-j$ entries, which we have seen to be zero in both cases.

\subsection{Proof of Proposition~\ref{prop-Delta-bound}}

\begin{qproposition}{\ref{prop-Delta-bound}}
 Consider HouseholderReduction (Figure~\ref{fig-householder-tzeng}).
% It instantiates ArnoldiReduction (Figure~\ref{fig-meta-algorithm}).
 We have:
 \stmtpropDeltabound
\end{qproposition}
\begin{proof}
The fact that HouseholderReduction is an instance of ArnoldiReduction was proved in the previous subsection.
There we also showed the bound on the number of arithmetic operations.
It remains to prove the error bound.
In order to highlight the gist of the argument we consider first the case $\mache = 0$.
For $0 \le j \le n$ we write $F_j := P_j P_{j-1} \cdots P_1$.
Recall that $f_1, \ldots, f_j$, i.e., the first $j$ rows of~$F$, coincide with the first $j$ rows of~$F_j$.
%We argue similarly as in Section~\ref{app-refinement}, but take a possibly positive~$\tau$ into account.
We have at the end of the for loop:
\begin{align}
 f_\ell M(a) & = \MF(a)[\ell, 1..n] F_j \notag \\
             & = \MF(a)[\ell, 1..\nF] F  - \MF(a)[\ell, j{+}1..\nF] F[j{+}1..\nF,\cdot] \label{eq-slices} \\
             & \hspace{29mm} + \MF(a)[\ell, j{+}1..n] F_j[j{+}1..n,\cdot] \notag
\end{align}
So we have, for $\ell \in \N_{\nF}$:
 \[
  \E(a)[\ell, \cdot] = - \MF(a)[\ell, j{+}1..\nF] F[j{+}1..\nF,\cdot] \ + \ \MF(a)[\ell, j{+}1..n] F_j[j{+}1..n,\cdot]
 \]
Thus:
\begin{equation}
\begin{aligned}
 & \norm{\E(a)[\ell, \cdot]} \\
 & = \norm{- \MF(a)[\ell, j{+}1..\nF] F[j{+}1..\nF,\cdot] \ + \ \MF(a)[\ell, j{+}1..n] F_j[j{+}1..n,\cdot]} \\
 & \le \norm{\MF(a)[\ell, j{+}1..\nF]} + \norm{\MF(a)[\ell, j{+}1..n]} \\
 & \le 2 \norm{\MF(a)[\ell, j{+}1..n]} \\
 & \le 2 \tau \;,
\end{aligned}
\label{eq-Delta-bound}
\end{equation}
where the first inequality is by the fact that the rows of $F$ and $F_j$ are orthonormal,
 and the last inequality by the ``if'' conditional in HouseholderReduction.

From the row-wise bound~\eqref{eq-Delta-bound} we get the following bound on the matrix norm, see \cite[Lemma~6.6.a]{Higham}:
\[
% \norm{\E(a)}_2 \le \sqrt{n} \norm{\E(a)}_\infty = \sqrt{n} \max_\ell \norm{\E(a)[\ell,\cdot]}_1 \le n \max_\ell \norm{\E(a)[\ell,\cdot]}_2
%  \stackrel{\text{\eqref{eq-Delta-bound}}}{\le} 2 \tau n\,.
 \norm{\E(a)}_2 \le 2 \sqrt{n} \tau
\]

We now consider $\mache > 0$.
We perform an error analysis similar to the one in~\cite[Chapter 19]{Higham} for QR factorisations with Householder reflectors.
Our situation is complicated by the error tolerance parameter~$\tau$;
 i.e., we have to handle a combination of the errors caused by~$\tau$ (as analysed above) and numerical errors.
In the following we distinguish between \emph{computed} quantities (with numerical errors and indicated with a ``hat'' accent)
 and \emph{ideal} quantities (without numerical errors, no ``hat'' accent).
It is important to note that when we speak of ideal quantities,
 we assume that we perform the \emph{same} arithmetic operations (multiplications, additions, etc.) as in the computed case;
 i.e., the only difference between computed and ideal quantities is that the computed quantities come with numerical errors.
In particular, we assume that the boolean values that the ``if'' conditional in HouseholderReduction evaluates to are the \emph{same}
 for the ideal computation.
The error caused by ``wrong'' evaluations of the conditional are already captured in the analysis above.
For the remaining analysis it suffices to add the (purely) numerical error.
%(Note that the computation on~$\tau$ above was valid without assuming that the ``if'' conditional evaluates to the same boolean value as for $\tau = 0$.)

Concretely, we write $\hatF \in \R^{\nF \times n}$ for the computed version of~$F$, and $\hatf_\ell$ for its rows.
For $a \in \Sigma$ we write $\MF(a) \in \R^{\nF \times n}$ for the \emph{computed} quantity,
 as we do not consider an ideal version (and to avoid clutter).
So we wish to bound the norm of $\E(a) := \hatF M(a) - \MF(a)[\cdot, 1..\nF] \hatF$.
By observing that $\hatF$ arises by (numerically) multiplying (at most~$n$) Householder reflectors,
 and by invoking the analysis from~\cite[Chapter~19.3]{Higham}
 (specifically the computation leading to Equation~(19.13)) we have
\begin{equation} \label{eq-error-F}
 \hatF = F + \Oti(n^{2.5} \mache) \;,%  \qquad \text{for $\Delta F \in \R^{\nF \times n}$ with $\norm{\Delta F} \
\end{equation}
where by $\Oti(p(n) \mache)$ for a polynomial~$p$ we mean a matrix or vector~$A$ of appropriate dimension with $\norm{A} \le c p(n) \mache$
 for a constant~$c > 0$.%
\footnote{We do not ``hunt down'' the constant~$c$. The analysis in~\cite[Chapter~19.3]{Higham} is similar-
 There the analogous constants are not computed either but are called ``small''.}
(In~\eqref{eq-error-F} we would have $A \in \R^{\nF \times n}$ with $\norm{A} \le c n^{2.5} \mache$.)
%Similarly we have
%\begin{equation} \label{eq-error-f}
% \hatf_\ell = f_\ell + \Oti(n^2 \mache) \qquad \text{for all $\ell \in \N_{\nF}$ \,.}
%\end{equation}
For $a \in \Sigma$ and $\ell \in \N_{\nF}$ define:
\begin{equation} \label{eq-error-def-g}
 g_{\ell, a} := \hatf_\ell M(a)
\end{equation}
Let $\hatg_{\ell, a}$ be the corresponding computed quantity.
Then we have, see~\cite[Chapter~3.5]{Higham}:
\begin{equation} \label{eq-error-g}
 g_{\ell, a} = \hatg_{\ell, a} + \Oti(m n^{1.5} \mache)
\end{equation}
Observe from the algorithm that $\MF(a)[\ell, \cdot]$ is computed by applying at most~$n$ Householder reflectors to~$\hatg_{\ell, a}$.
It follows with~\cite[Lemma~19.3]{Higham}:
\begin{equation} \label{eq-error-ghat1}
 \MF(a)[\ell, \cdot] = \left( \hatg_{\ell, a} + \Oti(m n^2 \mache) \right) P_1 P_2 \ldots P_j \;,
\end{equation}
where the $P_i$ are ideal Householder reflectors for subvectors of the computed~$\MF$, as specified in the algorithm.
As before, define $F_j := P_j P_{j-1} \cdots P_1$.
Then it follows directly from~\eqref{eq-error-ghat1}:
\begin{equation} \label{eq-error-ghat2}
 \MF(a)[\ell, \cdot] F_j = \hatg_{\ell, a} + \Oti(m n^2 \mache)
\end{equation}
By combining \eqref{eq-error-def-g}, \eqref{eq-error-g} and~\eqref{eq-error-ghat2} we obtain
\begin{equation} \label{eq-error-flip}
 \hatf_\ell M(a) = \MF(a)[\ell, \cdot] F_j + \Oti(m n^2 \mache)
\end{equation}
Using again the fact that the first $j$ rows of~$F$ coincide with the first $j$ rows of~$F_j$, we have as in~\eqref{eq-slices}:
\begin{equation} \label{eq-error-slices}
\begin{aligned}
\MF(a)[\ell, \cdot] F_j & = \MF(a)[\ell, 1..\nF] F - \MF(a)[\ell, j{+}1..\nF] F[j{+}1..\nF,\cdot] \\
 & \hspace{29mm}                                   + \MF(a)[\ell, j{+}1..n] F_j[j{+}1..n,\cdot]
\end{aligned}
\end{equation}
Using~\eqref{eq-error-F} we have:
\begin{equation} \label{eq-error-F-applied}
 \MF(a)[\ell, 1..\nF] F = \MF(a)[\ell, 1..\nF] \hatF + \Oti(m n^{2.5} \mache)
\end{equation}
By combining \eqref{eq-error-flip}, \eqref{eq-error-slices} and~\eqref{eq-error-F-applied} we get:
\begin{equation}
\begin{aligned} \label{eq-error-bigcombined}
 & \quad \hatf_\ell M(a) - \MF(a)[\ell, 1..\nF] \hatF \\
 & = - \MF(a)[\ell, j{+}1..\nF] F[j{+}1..\nF,\cdot] \\
 & \quad \mbox{} + \MF(a)[\ell, j{+}1..n] F_j[j{+}1..n,\cdot] + \Oti(m n^{2.5} \mache)
\end{aligned}
\end{equation}
We have:
\begin{align*}
 & \norm{\E(a)[\ell, \cdot]} \\
 & = \norm{\hatf_\ell M(a) - \MF(a)[\ell, 1..\nF] \hatF} && \text{def.\ of~$\E(a)$} \\
 & \le \norm{\MF(a)[\ell, j{+}1..\nF] F[j{+}1..\nF,\cdot] + \MF(a)[\ell, j{+}1..n] F_j[j{+}1..n,\cdot]} && \text{by~\eqref{eq-error-bigcombined}} \\
 & \quad \mbox{} + c m n^{2.5} \mache \\
 & \le 2 \tau + c m n^{2.5} \mache && \text{as in~\eqref{eq-Delta-bound}}
\end{align*}
From this row-wise bound we get the desired bound on the matrix norm, see \cite[Lemma~6.6.a]{Higham}:
\[
 \norm{\E(a)}_2 \le 2 \sqrt{n} \tau + c m n^{3} \mache
\]
\qed
\end{proof}

\subsection{Proof of Theorem~\ref{thm-error-bound}} \label{app-error}

\begin{qtheorem}{\ref{thm-error-bound}}
\stmtthmerrorbound
\end{qtheorem}
Part~1.\ follows from Proposition~\ref{prop-Delta-bound}.1.
Part~2.\ follows from Proposition~\ref{prop-Delta-bound}.2.\ and Theorem~\ref{thm-WA-minimal}.
It remains to prove part~3.
We use again the notation $\hatF$ for the computed version of~$F$, as in the proof of Proposition~\ref{prop-Delta-bound}.
We have the following lemma:
\begin{lemma} \label{lem-error-induction}
 Consider HouseholderReduction, see Figure~\ref{fig-householder-tzeng}.
 Let $m > 0$ such that $\norm{M(a)} \le m$ and $\norm{\MF(a)} \le m$ hold for all $a \in \Sigma$.
 Let $b := 2 \sqrt{n} \tau + c m n^{3} \mache$ be the bound from Proposition~\ref{prop-Delta-bound}.
 For all $w \in \Sigma^*$ we have:
 \[
  \norm{\hatF M(w) - \MF(w) \hatF} \le b |w| m^{|w|-1}
 \]
\end{lemma}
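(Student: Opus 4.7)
The proof will go by induction on $|w|$, using the single-letter bound from Proposition~\ref{prop-Delta-bound}.3 as the key building block and a telescoping identity to push from length $k$ to length $k+1$.

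For the base case, I would separately handle $w = \varepsilon$ (where $M(\varepsilon) = I_n$ and $\MF(\varepsilon) = I_{\nF}$, so $\hatF M(\varepsilon) - \MF(\varepsilon)\hatF = 0$, matching $b\cdot 0 \cdot m^{-1}$ interpreted as $0$) and length-one words $w = a$ (where the statement reduces exactly to the bound $\norm{\E(a)} \le b$ from Proposition~\ref{prop-Delta-bound}.3).

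For the inductive step, write $w = ua$ with $u \in \Sigma^*$ and $a \in \Sigma$, and use the telescoping identity
\[
  \hatF M(u)M(a) - \MF(u)\MF(a)\hatF = \bigl(\hatF M(u) - \MF(u)\hatF\bigr) M(a) + \MF(u)\bigl(\hatF M(a) - \MF(a)\hatF\bigr).
\]
Taking the $2$-norm, submultiplicativity together with the induction hypothesis on $u$, Proposition~\ref{prop-Delta-bound}.3 applied to the letter~$a$, the bound $\norm{M(a)} \le m$, and the bound $\norm{\MF(u)} \le m^{|u|}$ (which follows by submultiplicativity from $\norm{\MF(a)} \le m$ on single letters) yields
\[
  \norm{\hatF M(w) - \MF(w)\hatF} \le b|u| m^{|u|-1} \cdot m + m^{|u|}\cdot b = b(|u|+1)m^{|u|} = b|w|m^{|w|-1},
\]
completing the induction.

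There is no real obstacle in the argument; the mild subtlety is only to ensure the telescoping step is oriented correctly so that $M(a)$ (factor $\le m$) multiplies on the right of the inductive quantity while $\MF(u)$ (factor $\le m^{|u|}$) multiplies on the left of the single-letter error term, matching the $m^{|w|-1}$ factor in the target bound. Everything else is a routine application of submultiplicativity of $\norm{\cdot}_2$ and Proposition~\ref{prop-Delta-bound}.3.
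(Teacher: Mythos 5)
Your proof is correct and takes essentially the same approach as the paper: induction on $|w|$ with a telescoping decomposition that reduces to the single-letter bound from Proposition~\ref{prop-Delta-bound}.3. The only cosmetic difference is that you peel off the \emph{last} letter ($w = ua$), whereas the paper peels off the \emph{first} letter ($w = aw'$) and rewrites $\hatF M(a) = \MF(a)\hatF + \E(a)$ directly; both telescope to the same recurrence $b(|u|+1)m^{|u|}$ and both correctly place the $m^{|u|}$ factor so the powers line up.
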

\begin{proof}
We proceed by induction on~$|w|$.
The base case, $|w| = 0$, is trivial.
Let $|w| \ge 0$ and $a \in \Sigma$.
With the matrix $\E(a)$ from Proposition~\ref{prop-Delta-bound} we have:
\begin{align*}
 \hatF M(a w) - \MF(a w) \hatF
 & = \hatF M(a) M(w) - \MF(a) \MF(w) \hatF \\
 & = \big(\MF(a) \hatF + \E(a)\big) M(w) - \MF(a) \MF(w) \hatF \\
 & = \MF(a) \big(\hatF M(w) - \MF(w) \hatF \big) + \E(a) M(w)
\end{align*}
Using the induction hypothesis, Proposition~\ref{prop-Delta-bound}, and the bounds on $\norm{\MF(a)} \le m$ and $\norm{M(w)} \le m^{|w|}$ we obtain:
\[
 \norm{\hatF M(a w) - \MF(a w) \hatF}
 \ \le \ m b |w| m^{|w|-1} + b m^{|w|} \ = \ b (|w|+1) m^{|w|}
\]
\qed
\end{proof}

Now we can prove part~3.\ of Theorem~\ref{thm-error-bound}:
\begin{proof}[of Theorem~\ref{thm-error-bound}, part~3.]
We use again the $\Oti$-notation from Proposition~\ref{prop-Delta-bound}.
The vector $\hatf_1$ (the first row of~$\hatF$) is computed by applying one Householder reflector to~$e(1)$.
So we have by~\cite[Lemma~19.3]{Higham}:
\[
 \hatf_1 = \frac{\pm \alpha}{\norm{\alpha}} + \Oti(n \mache)
\]
Hence it follows:
\begin{equation} \label{eq-error-alpha}
 \alphaF \hatF = \alphaF[1] \hatf_1 = \alpha + \Oti( \norm{\alpha} n \mache)
\end{equation}

The vector $\etaF$ is computed by multiplying $\hatF$ with~$\eta$.
So we have by~\cite[Chapter~3.5]{Higham}:
\begin{equation} \label{eq-error-eta}
 \etaF = \hatF \eta + \Oti(\norm{\eta} n^{1.5} \mache)
\end{equation}

Let $w \in \Sigma^*$.
We have:
\begin{align*}
 & |L_\A(w) - L_{\AF}(w)| \\
 & = |\alpha M(w) \eta - \alphaF \MF(w) \etaF| \\
 & = |\alphaF \hatF M(w) \eta - \alphaF \MF(w) \hatF \eta| + \Oti( \norm{\alpha} m^w \norm{\eta} n^{1.5} \mache )
  && \text{by \eqref{eq-error-alpha}, \eqref{eq-error-eta}} \\
 & = |w| \norm{\alpha} \left(2 \sqrt{n} \tau + c m n^{3} \mache\right) m^{|w|-1} \norm{\eta} + \Oti( \norm{\alpha} m^w \norm{\eta} n^{1.5} \mache )
  && \text{Lemma~\ref{lem-error-induction}} \\
%\intertext{where $b := 2 \sqrt{n} \tau + c m n^{3} \mache$ from Lemma~\ref{lem-error-induction}. So it follows:}
% & L_\A(w) - L_{\AF}(w) \\
 & = 2 |w| \norm{\alpha} m^{|w|-1} \norm{\eta} \sqrt{n} \tau + \Oti( \max\{|w|,1\} \norm{\alpha} m^{|w|} \norm{\eta} n^3 \mache )
\end{align*}
%Recall that we have $\alpha = \alphaF F$, where the rows of~$F$ are orthonormal.
%It follows that $\norm{\alphaF} = \norm{\alpha}$.
%We have:
%\begin{align*}
% L_\A(w) - L_{\AF}(w)
% & = \alpha M(w) \eta - \alphaF \MF(w) \etaF = \alphaF F M(w) \eta - \alphaF \MF(w) F \eta \\
% & = \alphaF \big( F M(w) - \MF(w) F \big) \eta
%\intertext{Thus, using Lemma~\ref{lem-error-induction}:}
% |L_\A(w) - L_{\AF}(w)|
% & \le \norm{\alphaF} \norm{F M(w) - \MF(w) F} \norm{\eta} \\
% & \le 2 \tau n \norm{\alpha} \norm{\eta} |w| m^{|w|-1}
%\end{align*}
One can show the same bound on $|L_{\AF}(w) - L_{\A'}(w)|$ in the same way.
The statement follows.
\qed
\end{proof}

\subsection{Image Compression} \label{app-image-compression}

We have implemented the algorithm of Theorem~\ref{thm-error-bound} in a prototype tool
in C++ using the Boost uBLAS library, which provides basic matrix and vector data structures.
Further, we have built a translator between images and automata:
it loads compressed
images, constructs an automaton based on recursive algorithm sketched in the main body of the paper,
feeds this automaton to the minimiser, reads back the minimised automaton,
and displays the resulting compressed image.

\begin{figure}
%\centering
\mbox{} \hspace{-20mm}
\begin{tabular}{p{0.6\textwidth}p{0.5\textwidth}}
\subfloat[original image: \ 33110 states \label{beehive:original}]{\includegraphics[scale=0.4]{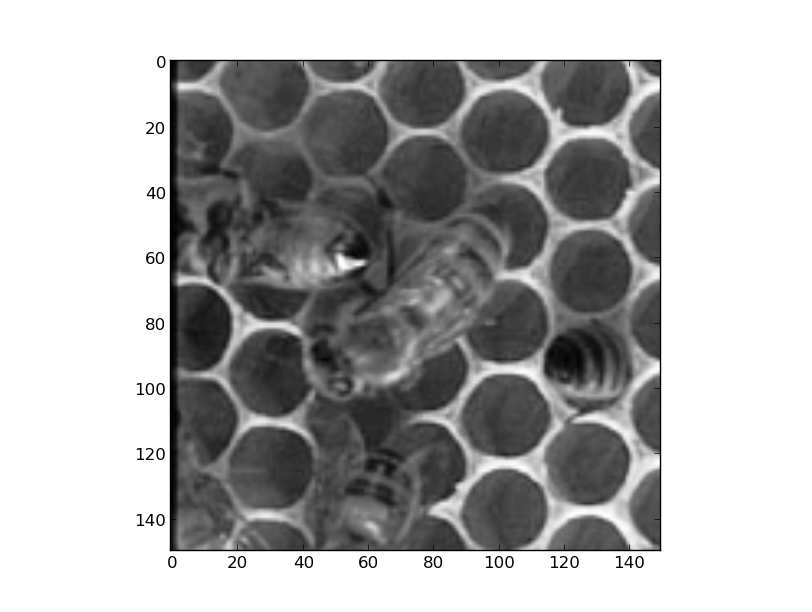}}  &
\subfloat[$\tau=10^{-6}$: \ 229 states \label{beehive:mimimal_pretty}]{\includegraphics[scale=0.4]{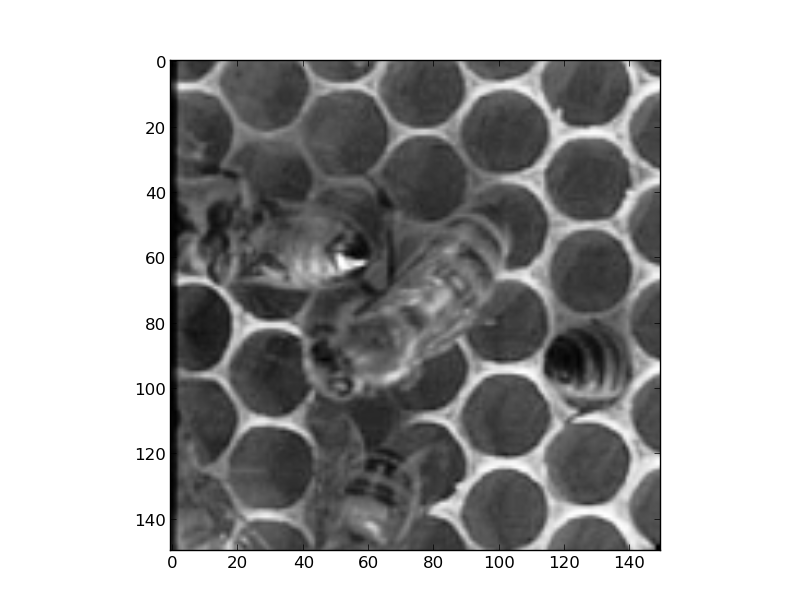}} \\
\subfloat[$\tau=1.5 \times 10^{-2}$: \ 175 states \label{beehive:mimimal_fuzzy}]{\includegraphics[scale=0.4]{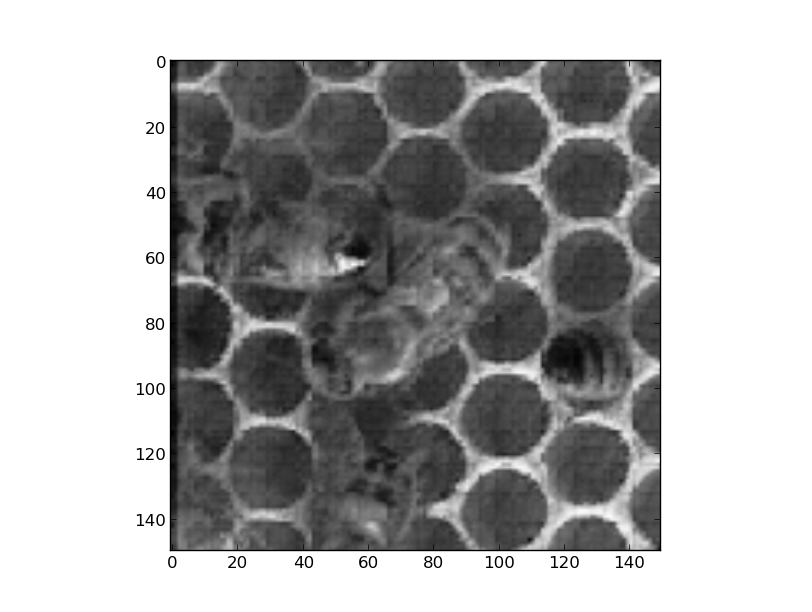}} &
\subfloat[$\tau=2 \times 10^{-2}$: \ 121 states \label{beehive:mimimal_blurred}]{\includegraphics[scale=0.4]{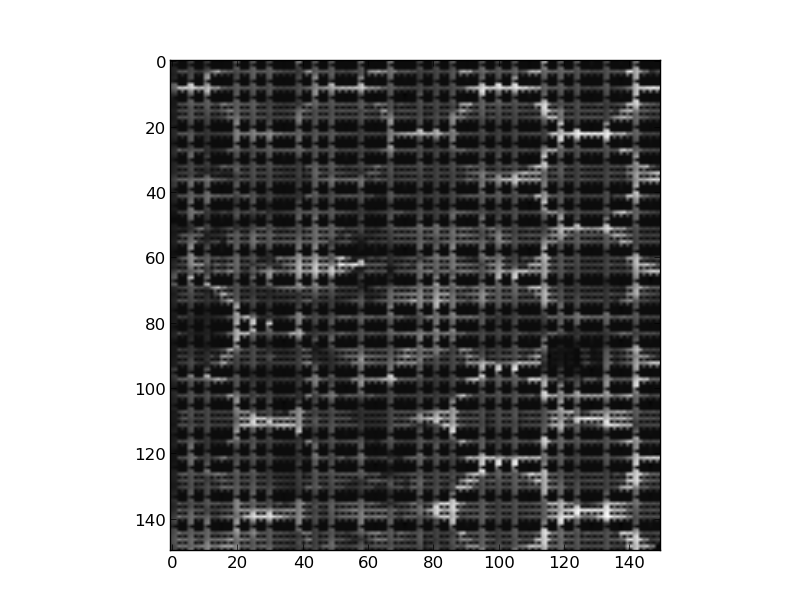}} \\
\end{tabular}
\caption{Image compression via WA minimisation with different values of the error tolerance parameter~$\tau$.\label{beehive}}
\end{figure}

We have applied our image compression tool to some images.
Figure~\ref{beehive:original} shows a picture with a resolution of $150 \times 150$ pixels.
The weighted automaton that encodes the image exactly has 33110 states.
Applying minimisation with error tolerance parameter $\tau=10^{-6}$ yields an automaton
with 229 states, which leads to the compressed picture in Figure~\ref{beehive:mimimal_pretty}.
Larger values of~$\tau$ lead to smaller automata and blurrier pictures,
 see Figures \ref{beehive:mimimal_fuzzy} and~\ref{beehive:mimimal_blurred},
 where the pictures change perceivably.
%Gradually lowering~$\tau$ leads to a smaller automata and blurrier pictures.
%The points where the pictures change perceivably
%are $\tau=10^{-2}$ (automaton with 175 states), as depicted in Figure~\ref{beehive:mimimal_fuzzy}
%and finally $\tau=2 \times 10^{-2}$ (automaton with 121 states),
%as depicted in Figure~\ref{beehive:mimimal_blurred}.

We remark that our tool is not meant to deliver state-of-the-art image compression, which would require many tweaks, as indicated in~\cite{CulikKari93}.

\section{Proofs of Section~\ref{sec-minpro}} \label{app-minpro}

\subsection{Continuation of the proof of Proposition~\ref{prop-redCubeAut} } \label{app-prop-redCubeAut}

\begin{qproposition}{\ref{prop-redCubeAut}}
 \stmtpropredCubeAut
\end{qproposition}
\begin{proof}
Consider the reduction given in the main body of the paper.
Observe that we have for all $i \in \{2, \ldots, k\}$ and all $s \in \set{d}$ that
\begin{equation}
  L_\A(a_i b_s) \ = \  M(a_i)[1,i] \ M(b_s)[i,k+1] \ = \ p_i[s]\,. \label{eq-redCubeAut0}
\end{equation}
It remains to show the correctness of the reduction, i.e., we need to show that
 there is a set $Q = \{q_1, \ldots, q_\ell\} \subseteq [0,1]^d$ with $\conv(Q) \supseteq P$
  if and only if
 there is a PA $\A' = (\ell+1, \Sigma, M', \alpha', \eta')$ equivalent to~$\A$.

For the ``only if'' direction, suppose that $Q = \{q_1, \ldots, q_\ell\}$ such that $\conv(Q) \supseteq P$.
As $p_1 = (0, \ldots, 0) \in P$ is a vertex of the hypercube, we have $p_1 \in Q$, say $q_1 = (0,\ldots,0)$.
As $\conv(Q) \supseteq P$, for each $i \in \{2, \ldots, k\}$
 there are $\lambdas{i}_1, \ldots, \lambdas{i}_\ell \ge 0$ with $\sum_{j=1}^\ell \lambdas{i}_j = 1$ and
 \begin{equation} \label{eq-redCubeAut1}
  p_i = \sum_{j=1}^\ell \lambdas{i}_j q_j = \sum_{j=2}^\ell \lambdas{i}_j q_j\,.
 \end{equation}
Build the PA $\A' = (\ell+1, \Sigma, M', \alpha', \eta')$ as follows.
Set $M'(a_i)[1, j] := \lambdas{i}_j$ and
   $M'(b_s)[j, \ell+1] := q_j[s]$ for all $i \in \{2, \ldots, k\}$ and all $j \in \{2, \ldots, \ell\}$ and all $s \in \set{d}$,
 and set all other entries of~$M'$ to~$0$.
Set $\alpha' := e(1)$ and $\eta' := e(\ell+1)^T$.
Then $\A, \A'$ are equivalent, as
 \begin{align*}
  L_\A(a_i b_s)
   &\stackrel{\eqref{eq-redCubeAut0}}{=} p_i[s]
   \stackrel{\eqref{eq-redCubeAut1}}{=} \sum_{j=2}^\ell \lambdas{i}_j q_j[s]
   = \sum_{j=2}^\ell M'(a_i)[1,j] \ M'(b_s)[j,\ell+1]
   = L_{\A'}(a_i b_s) \,.
 \end{align*}

For the ``if direction'', suppose that $\A' = (\ell+1, \Sigma, M', \alpha', \eta')$ is a PA with $L_{\A'} = L_{\A}$.
For any vector $\beta \in [0,1]^{\ell+1}$ we define $\supp(\beta) := \{j \in \set{\ell+1} \mid \beta[j] > 0\}$.
Define the following subsets of~$\set{\ell+1}$:
\begin{align*}
 J_1 & := \supp(\alpha') \cap \bigcup_{i \in \{2, \ldots, k\}, \ s \in \set{d}} \supp\left( M'(a_i) M'(b_s) \eta' \right) \\
 J_2 & := \bigcup_{i \in \{2, \ldots, k\}} \supp\left(\alpha' M'(a_i) \right) \cap \bigcup_{s \in \set{d}} \supp\left( M'(b_s) \eta' \right) \\
 J_3 & := \bigcup_{i \in \{2, \ldots, k\}, \ s \in \set{d}} \supp\left(\alpha' M'(a_i) M'(b_s) \right) \cap \supp( \eta' )
\end{align*}
Recall that $L_{\A'} = L_{\A}$.
Since $L_{\A}(b_s) = 0$ for all~$s$, we have $J_1 \cap J_2 = \emptyset$.
Since $L_{\A}(\tau) = 0$, we have $J_1 \cap J_3 = \emptyset$.
Since $L_{\A}(a_i) = 0$ for all~$i$, we have $J_2 \cap J_3 = \emptyset$.
If one of $J_1, J_2, J_3$ is the empty set, then $J_1 = J_2 = J_3 = \emptyset$ and we have $L_{\A}(w) = 0$ for all $w \in \Sigma^*$,
 so then by~\eqref{eq-redCubeAut0} we have $P = \{(0, \ldots, 0)\}$, and one can take $Q = P$.
So we can assume for the rest of the proof that $J_1, J_2, J_3$ are all non-empty.
But they are pairwise disjoint, so it follows $|J_2| \le \ell-1$.
Without loss of generality, assume $1 \not\in J_2$.
For $i \in \{2, \ldots, k\}$ and $j \in J_2$, define $\lambdas{i}_j \ge 0$ and $q_j \in [0,1]^d$ with
\[
 \lambdas{i}_j = \left( \alpha' M'(a_i) \right)[j] \quad \text{and} \quad q_j[s] = \left( M'(b_s) \eta' \right)[j] \quad \text{for $s \in \set{d}$.}
\]
Let $q_1 := (0, \ldots, 0)$.
Since $\alpha' M'(a_i)$ is stochastic, one can choose $\lambdas{i}_1 \ge 0$ so that $\sum_{j \in \{1\} \cup J_2} \lambdas{i}_j = 1$.
We have:
\begin{align*}
 p_i[s]
 & \stackrel{\eqref{eq-redCubeAut0}}{=} L_{\A}(a_i b_s) = L_{\A'}(a_i b_s) = \alpha' M'(a_i) M'(b_s) \eta' \\
 & = \sum_{j \in J_2} \left( \alpha' M'(a_i) \right)[j] \ \left( M'(b_s) \eta' \right)[j] \quad = \ \sum_{j \in \{1\} \cup J_2} \lambdas{i}_j q_j[s]
\end{align*}
It follows that $P \subseteq \conv(Q)$ holds for $Q := \{q_j \mid j \in \{1\} \cup J_2 \}$, with $|Q| \le \ell$.
\qed
\end{proof}

\subsection{Proof of Proposition~\ref{prop-cube-NPhardness}} \label{app-prop-cube-NPhardness}

\begin{qproposition}{\ref{prop-cube-NPhardness}}
\stmtpropcubeNPhardness
\end{qproposition}

\begin{proof}
We reduce 3SAT to the hypercube problem.
Let $x_1, \ldots, x_N$ be the variables and let $\varphi = c_1 \land \ldots \land c_M$ be a 3SAT formula.
Each clause~$c_j$ is a disjunction of three literals $c_j = l_{j,1} \lor l_{j,2} \lor l_{j,3}$,
 where $l_{j,k} = x_{j,k}^-$ or $l_{j,k} = x_{j,k}^+$ and $x_{j,k} \in \{x_1, \ldots, x_N\}$.
(It is convenient in the following to distinguish between a variable and a positive literal,
 so we prefer the notation $x_i^-$ and $x_i^+$ over the more familiar $\neg x_i$ and $x_i$ for literals.)
We can assume that no clause appears twice in~$\varphi$ and that no variable appears twice in the same clause.
Define a set $D$ of coordinates:
 \[
  D := \{\spco{x_i}, y_i, z_i \mid i \in \set{N}\} \cup \{\spco{c_j} \mid j \in \set{M}\}
 \]
We take $d := |D| = 3 N + M$.
%For a point $v \in [0,1]^D$ and $u \in D$ we write $v[u]$ for the $u$-coordinate of~$v$.
For $u \in D$ denote by $e(u) \in \{0,1\}^D$ the vector with $e(u)[u] = 1$ and $e(u)[u'] = 0$ for $u' \in D \setminus \{u\}$.
For $i \in \set{N}$, define shorthands $f(x_i^-) := e(y_i)$ and $f(x_i^+) := e(y_i) + e(z_i)$.
Observe that those points are vertices of the hypercube.

Define:
\begin{align*}
 P_{\mathit{var}}
 & := \{e(\spco{x_i}) + f(x_i^-), \ e(\spco{x_i}) + f(x_i^+) \mid i \in \set{N}\} \\
 P_{\mathit{cla}}
 & := \{e(\spco{c_j}) + f(l_{j,1}), \ e(\spco{c_j}) + f(l_{j,2}), \ e(\spco{c_j}) + f(l_{j,3}) \mid j \in \set{M}\} \\
 p(x_i)
 & := \frac12 e(\spco{x_i}) + e(y_i) + \frac12 e(z_i) \\
 & \ = \frac12 e(\spco{x_i}) + \frac12 f(x_i^-) + \frac12 f(x_i^+) \text{ \hspace{22.5mm} for $i \in \set{N}$} \\
 p(c_j)
 & := \frac23 e(\spco{c_j}) + \frac13f(l_{j,1}) + \frac13 f(l_{j,2}) + \frac13 f(l_{j,3}) \text{ \qquad for $j \in \set{M}$} \\
 P & := P_{\mathit{var}} \cup P_{\mathit{cla}} \cup \{p(x_1), \ldots, p(x_N), \ p(c_1), \ldots, p(c_M)\}
\end{align*}
Figure~\ref{fig-cube-NPhardness} visualizes the points in~$P$.

\begin{figure}
\begin{tikzpicture}[scale=2.5]
\coordinate (fxm) at (0,0);
\coordinate (fxp) at (1,0);
\coordinate (gxm) at (0,1);
\coordinate (gxp) at (1,1);
\coordinate (p) at (0.5,0.5);
%\draw[fill] (fxm) circle (0.03);
%\draw[fill] (fxp) circle (0.03);
\draw[fill] (gxm) circle (0.03);
\draw[fill] (gxp) circle (0.03);
\draw[fill] (p) circle (0.03);
\draw (fxm) -- (fxp) -- (gxp) -- (gxm) -- (fxm);
\node at ($(fxm) + (0,-0.1)$) {$f(x_i^-)$};
\node at ($(fxp) + (0,-0.1)$) {$f(x_i^+)$};
\node at ($(p) + (0,-0.15)$) {$p(x_i)$};
\node at ($(gxm) + (-0.0,+0.15)$) {$e(\spco{x_i}) + f(x_i^-)$};
\node at ($(gxp) + (+0.0,+0.15)$) {$e(\spco{x_i}) + f(x_i^+)$};
\end{tikzpicture}
\hfill
%\begin{tikzpicture}[scale=2.5]
%\coordinate (p1) at (1,0);
%\coordinate (p2) at (0,1);
%\coordinate (p3) at (-0.7,-0.3);
%\draw[dashed] (p1) -- ($1/2*(p2) + 1/2*(p3)$);
%\draw[dashed] (p2) -- ($1/2*(p1) + 1/2*(p3)$);
%\draw[dashed] (p3) -- ($1/2*(p1) + 1/2*(p2)$);
%\coordinate (center) at ($1/3*(p1) + 1/3*(p2) + 1/3*(p3)$);
%\draw[fill=white] (center) circle (0.03);
%\node at ($(p1)+(0,-0.2)$) {$e(\spco{c_j}) + f(l_{j,1})$};
%\node[right] at ($(p2)+(0.1,0)$) {$e(\spco{c_j}) + f(l_{j,2})$};
%\node[left] at ($(p3)+(0.05,0.1)$) {$e(\spco{c_j}) + f(l_{j,3})$};
%%\node[left] at (0,0.05) {$e(\spco{c_j})$};
%\node[right] at ($(center) + (0.1,0.03)$) {$\hat p(c_j)$};
%\draw (p1) -- (p2) -- (p3) -- (p1);
%\foreach \i in {1,2,3} {
% \draw[fill] (p\i) circle (0.03);
%}
%\end{tikzpicture}
\begin{tikzpicture}[scale=2.5]
\coordinate (p1) at (0.8,0);
\coordinate (p2) at (0.1,0.2);
\coordinate (p3) at (-0.7,-0.3);
\coordinate (p23) at ($1/2*(p2) + 1/2*(p3)$);
\coordinate (p13) at ($1/2*(p1) + 1/2*(p3)$);
\coordinate (p12) at ($1/2*(p1) + 1/2*(p2)$);
\node at ($(p1)+(0.05,-0.15)$) {$e(\spco{c_j}) + f(l_{j,1})$};
\node at ($(p2)+(0.,0.1)$) {$e(\spco{c_j}) + f(l_{j,2})$};
\node[left] at ($(p3)+(0.0,0.0)$) {$e(\spco{c_j}) + f(l_{j,3})$};
%\node[right] at ($(center) + (0.1,0.03)$) {$\hat p(c_j)$};
\draw (p1) -- (p2) -- (p3) -- (p1);
\foreach \i in {1,2,3} {
 \draw[fill] (p\i) circle (0.03);
}
\foreach \i in {1,2,3} {
 \coordinate (dp\i) at ($(p\i) + (0,-1)$);
}
\draw (dp1) -- (dp2) -- (dp3) -- (dp1);
\draw[dashed] (p23) -- (dp1);
\draw[dashed] (p13) -- (dp2);
\draw[dashed] (p12) -- (dp3);
\coordinate (center) at ($1/3*(dp1) + 2/3*(p23)$);
\draw[fill] (center) circle (0.03);
\node[right] at ($(center) + (0.05,0.02)$) {$p(c_j)$};
\node at ($(dp1)+(0.2,-0.0)$) {$f(l_{j,1})$};
\node at ($(dp2)+(0,-0.15)$) {$f(l_{j,2})$};
\node[left] at ($(dp3)+(0.0,0.05)$) {$f(l_{j,3})$};
\end{tikzpicture}
\caption{Reduction from 3SAT to the hypercube problem.
The left figure visualizes the $\{z_i, \spco{x_i}\}$-face of the hypercube
 with the $y_i$-coordinate $=1$ and all other coordinates $=0$.
The black points are in~$P$.
Observe that $p(x_i) \in \conv(\{e(\spco{x_i}) + f(x_i^-), \ f(x_i^+)\})$ and
             $p(x_i) \in \conv(\{e(\spco{x_i}) + f(x_i^+), \ f(x_i^-)\})$.
\newline
The right figure visualizes six hypercube vertices and a point $p(c_j) \in P$.
The black points are in~$P$.
Observe that $p(c_j) \in \conv(\{e(\spco{c_j}) + f(l_{j,k(1)}), \ e(\spco{c_j}) + f(l_{j,k(2)}), \ f(l_{j,k(3)})\})$
 for all $k(1), k(2), k(3)$ with $\{k(1), k(2), k(3)\} = \{1,2,3\}$.
}
\label{fig-cube-NPhardness}
\end{figure}

Observe that $|P| = 3 N + 4 M$.
Take $\ell := 3 N + 3 M$.

First we show that if $\varphi$ is satisfiable, then there is a set~$Q \subseteq [0,1]^d$ with $|Q| \le \ell$ and $\conv(Q) \supseteq P$.
Let $\sigma: \{x_1, \ldots, x_N\} \to \{\mathit{true}, \mathit{false}\}$ be an assignment that satisfies~$\varphi$.
Define:
\begin{align*}
 s_i & := \begin{cases} f(x_i^-) & \text{ if $\sigma(x_i) = \mathit{false}$} \\
                        f(x_i^+) & \text{ if $\sigma(x_i) = \mathit{true}$}
          \end{cases} \qquad \text{ for $i \in \set{N}$}\\
 Q & := P_{\mathit{var}} \cup P_{\mathit{cla}} \cup \{ s_1, \ldots, s_N\}
\end{align*}
We have $|Q| = 3 N + 3 M$.
Clearly, $\conv(Q) \supseteq P_{\mathit{var}} \cup P_{\mathit{cla}}$.
Moreover:
\begin{itemize}
 \item Let $i \in \set{N}$.
       If $\sigma(x_i) = \mathit{false}$, then $p(x_i) = \frac12 \big(e(\spco{x_i}) + f(x_i^+)\big) + \frac12 s_i$;
       if $\sigma(x_i) = \mathit{true}$, then $p(x_i) = \frac12 \big(e(\spco{x_i}) + f(x_i^-)\big) + \frac12 s_i$.
 \item Let $j \in \set{M}$.
       As $\sigma$ satisfies~$\varphi$, there are $k(1),k(2),k(3)$ such that $\{k(1), k(2), k(3)\} \in \{1,2,3\}$ and $\sigma(l_{j,k(1)}) = \mathit{true}$.
       Let $i \in \set{N}$ such that $l_{j,k(1)} \in \{x_i^-, x_i^+\}$.
       Then $p(c_j) = \frac13 \big(e(\spco{c_j}) + f(l_{j,k(2)})\big) + \frac13 \big(e(\spco{c_j}) + f(l_{j,k(3)})\big) + \frac13 s_i$.
\end{itemize}
Hence we have that $\conv(Q) \supseteq P$.

For the converse, let $Q \subseteq [0,1]^d$ with $|Q| \le 3 N + 3 M$ and $\conv(Q) \supseteq P$.
Then $Q \supseteq P_{\mathit var} \cup P_{\mathit cla}$, as $P_{\mathit var}$ and $P_{\mathit cla}$ consist of hypercube vertices.

Let $i \in \set{N}$.
Let $Q_i^{\mathit var} \subseteq Q$ be a minimal subset of~$Q$ with $p(x_i) \in \conv(Q_i^{\mathit var})$,
 i.e., if $Q_i'$ is a proper subset of~$Q_i^{\mathit var}$, then $p(x_i) \not\in \conv(Q_i')$.
As $p(x_i)[\spco{x_{i'}}] = p(x_i)[\spco{c_j}] = 0$ holds for all $i' \in \set{N} \setminus \{i\}$ and all $j \in \set{M}$, we have
 \[
  Q_i^{\mathit var} \cap (P_{\mathit var} \cup P_{\mathit cla}) \quad \subseteq \quad \{e(\spco{x_i}) + f(x_i^-), \ e(\spco{x_i}) + f(x_i^+)\} \,.
 \]
As $p(x_i)[\spco{x_i}] = \frac12$ and $p(x_i)[y_i] = 1$, there is a point $s_i \in Q_i^{\mathit var}$ with
 $s_i[\spco{x_i}] \le \frac12$ and $s_i[y_i] = 1$ and $s_i[z_i] \in [0,1]$ and $s_i[u] = 0$ for all other coordinates~$u \in D$.

It follows that $Q = P_{\mathit var} \cup P_{\mathit cla} \cup \{s_1, \ldots, s_N\}$ and $|Q| = 3 N + 3 M$.
Let $\sigma$ be any assignment with
 \[
  \sigma(x_i) = \begin{cases} \mathit{false} & \text{ if $s_i[z_i] = 0$} \\
                              \mathit{true}  & \text{ if $s_i[z_i] = 1$ \,.}
                \end{cases}
 \]
We show that $\sigma$ satisfies~$\varphi$.
Let $j \in \set{M}$.
Let $Q_j^{\mathit cla} \subseteq Q$ be a minimal subset of~$Q$ with $p(c_j) \in \conv(Q_j^{\mathit cla})$,
 i.e., if $Q_j'$ is a proper subset of~$Q_j^{\mathit cla}$, then $p(c_j) \not\in \conv(Q_j')$.
As $p(c_j)[\spco{c_{j'}}] = p(c_j)[\spco{x_i}] = 0$ holds for all $j' \in \set{M} \setminus \{j\}$ and all $i \in \set{N}$,
 we have
 \[
  Q_j^{\mathit cla} \subseteq \{ e(\spco{c_j}) + f(l_{j,1}), \ e(\spco{c_j}) + f(l_{j,2}), \ e(\spco{c_j}) + f(l_{j,3})\} \cup \{s_1, \ldots, s_N\}\,.
 \]
As $p(c_j)[\spco{c_j}] = \frac23 < 1$, there exists an~$i$ such that $s_i \in Q_j^{\mathit cla}$.
As $s_i[y_i] = 1 > 0$, we have $p(c_j)[y_i] > 0$.
Hence the variable~$x_i$ appears in~$c_j$, so one of the following two cases holds:
\begin{itemize}
\item
 The literal $x_i^-$ appears in~$c_j$.
 As we have $p(c_j)[z_i] = 0$, it follows that $q[z_i] = 0$ holds for all $q \in Q_j^{\mathit cla}$.
 In particular, we have $s_i[z_i] = 0$, so $\sigma(x_i) = \mathit{false}$.
\item
 The literal $x_i^+$ appears in~$c_j$.
 Note that for all points $q \in Q_j^{\mathit cla}$ we have $q[y_i] \ge q[z_i]$.
 As we have $p(c_j)[y_i] = \frac13 = p(c_j)[z_i]$, it follows that $q[y_i] = q[z_i]$ holds for all $q \in Q_j^{\mathit cla}$.
 In particular, we have $s_i[z_i] = 1$, so $\sigma(x_i) = \mathit{true}$.
\end{itemize}
For both cases it follows that $\sigma$ satisfies~$c_j$.
As $j$ was chosen arbitrarily, we conclude that $\sigma$ satisfies~$\varphi$.
This completes the reduction to the hypercube problem.

The given reduction does not put the origin in~$P$.
However, $P_{\mathit{var}} \cup P_{\mathit{cla}} \subseteq P$ consist of corners of the hypercube.
One can pick one of the corners in~$P$ and apply a simple linear coordinate transformation to all points in~$P$ such that
 the picked corner becomes the origin.
Hence the restricted hypercube problem is NP-hard as well.
\qed
\end{proof}

\subsection{Proof of Proposition~\ref{prop-PSPACE}} \label{app-prop-PSPACE}
\begin{qproposition}{\ref{prop-PSPACE}}
\stmtpropPSPACE
\end{qproposition}
\begin{proof}
We say, a WA $\A$ is \emph{zero} if $L_\A(w) = 0$ holds for all $w \in \Sigma^*$.
For two WAs $\A_i = (n_i, \Sigma, M_i, \alpha_i, \eta_i)$ (with $i=1,2$), define their \emph{difference WA}
 $\A = (n, \Sigma, M, \alpha, \eta)$, where $n = n_1 + n_2$, and $M(a) = \begin{pmatrix} M_1(a) & 0 \\ 0 & M_2(a) \end{pmatrix}$ for $a \in \Sigma$,
  and $\alpha = (\alpha_1, \alpha_2)$, and $\eta = (\eta_1^T, - \eta_2^T)^T$.
Clearly, $L_A(w) = L_{\A_1}(w) - L_{\A_2}(w)$ holds for all $w \in \Sigma^*$.
So WAs $\A_1$~and~$\A_2$ are equivalent if and only if their difference WA is zero.

A WA $\A = (n, \Sigma, M, \alpha, \eta)$ is zero if and only if
 all vectors of the forward space $\langle \alpha M(w) \mid w \in \Sigma^* \rangle$ are orthogonal to~$\eta$
 (see, e.g., \cite{Tzeng}).
It follows that a WA $\A = (n, \Sigma, M, \alpha, \eta)$ is zero if and only if there is a vector space $\FV \subseteq \R^n$
 with $\alpha \in \FV$, and $\FV \eta = \{0\}$, and $\FV M(a) \subseteq \FV$ holds for all $a \in \Sigma$.
(Here, the actual forward space is a subset of~$\FV$.)

The proposition follows from those observations.
\qed
\end{proof}

}{}
\end{document}